\newcommand{\wep}{{\sl{w.e.p.}}}
\newcommand{\aas}{{\sl{a.a.s.}}}
\newtheorem{theorem}{Theorem}[section]
\newtheorem{lemma}[theorem]{Lemma}
\newtheorem{cor}[theorem]{Corollary}
\newcommand{\Prob}{\mathbb{P}}
\newcommand{\E}{\mathbb E}
\newcommand\eps{\varepsilon}
\newcommand{\reg}{\mathcal{R}}
\newcommand{\pp}{\mathbb P}
\providecommand{\keywords}[1]{\textbf{\textit{Keywords---}} #1}
\begin{document}

%\mainmatter
\title{Non-Uniform Distribution of Nodes in the Spatial Preferential Attachment Model}
%\titlerunning{Asymmetric SPA Model}

\author{
Jeannette Janssen\thanks{Department of Mathematics and Statistics, Dalhousie University, Halifax, NS, Canada} 
\and 
Pawe\l{}~Pra\l{}at\thanks{Department of Mathematics, Ryerson University, Toronto, ON, Canada}
\and 
Rory Wilson\thanks{Department of Mathematics and Statistics, Dalhousie University, Halifax, NS, Canada}  
}

%\institute{Dalhousie University, Halifax, Canada \\ \mailsa \\ \and Ryerson University, Toronto, Canada\\ \mailsb}

\maketitle

\begin{abstract}
The spatial preferential attachment (SPA) is a model for complex networks. In the SPA model, nodes are embedded in a metric space, and each node has a sphere of influence whose size increases if the node gains an in-link, and otherwise decreases with time. In this paper, we study the behaviour of the SPA model when the distribution of the nodes is non-uniform. Specifically, the space is divided into dense and sparse regions, where it is assumed that the dense regions correspond to coherent communities. We prove precise theoretical results regarding the degree of a node, the number of common neighbours, and the average out-degree in a region.  Moreover, we show how these theoretically derived results about the graph properties of the model can be used to formulate a reliable estimator for the distance between certain pairs of nodes, and to estimate the density of the region containing a given node.
\end{abstract}

\keywords{Spatial Random Graphs, Spatial Preferrential Attachment Model, Preferential Attachment, Complex Networks, Web Graph, Co-citation, Common Neighbours}

\section{Introduction}

There has been a great deal of recent interest in modelling complex networks, a result of the increasing connectedness of our world. The hyperlinked structure of the Web, citation patterns, friendship relationships, infectious disease spread, these are seemingly disparate linked data sets which have fundamentally very similar natures. 

Many models of complex networks have a common weakness: the `uniformity' of the nodes; other than link structure there is no way to distinguish the nodes. One family of models which overcomes this deficiency is the family of spatial (or geometric) models, wherein the nodes are embedded in a metric space. A node's position---especially in relation to the others---has real-world meaning: the character of the node is encoded in its location. Similar nodes are closer in the space than dissimilar nodes. This metric space has many potential meanings: in communication networks, perhaps physical distance; in a friendship graph, an interest space; in the World Wide Web, a topic space. As an illustration, a node representing a webpage on pet food would be closer in the metric space to one on general pet care than to one on travel. 

The Spatial Preferrential Attachment Model~\cite{spa1}, designed as a model for the World Wide Web, is one such spatial model. Indeed, as its name suggests, the SPA Model combines geometry and preferential attachment. Setting the SPA Model apart is the incorporation of `spheres of influence' to accomplish preferential attachment: the greater the degree of the node, the larger its sphere of influence, and hence the higher the likelihood of the node gaining more neighbours. The SPA model produces scale-free networks, which exhibit many of the characteristics of real-life networks (see~\cite{spa1,typSPA}). In~\cite{matt}, it was shown that the SPA model gave the best fit, in terms of graph structure, for a series of social networks derived from Facebook. 

As the motivation behind spatial models is the `second layer of meaning'---the character of the nodes as represented by their positions in the metric space---we hope to uncover this layer through examination of the link structure. In particular, estimating the distance between nodes in the metric space forms the basis for two important link mining tasks: finding entities that are similar---represented by nodes that are close together in the metric space---and finding communities---represented by spatial clusters of nodes in the metric space. We show how a theoretical analysis of a spatial model can lead to reliable tools to extract the `second layer of meaning'.

The majority of the spatial models to this point have used uniform random distribution of nodes in the space. However, considering the real-world networks these models represent, this concept does not capture the following essential aspect of real-life data. Indeed, on a basic level, if the metric space represents actual physical space, and the nodes people, then we note that people cluster in cities and towns, rather than being uniformly spread across the land. More abstractly, there are more webpages on a popular topic, corresponding to a small area of our metric space, than for a more obscure topic. The development of spatial network models naturally then begins to incorporate varying densities of node distribution: both `clumps' of higher/lower density, as well as gradually changing densities, are both possibilities. Of the more important goals is that of community recognition: the discovery and quantification of characteristically (semantically) similar nodes. 

In this work we generalize the SPA model to an inhomogeneous distribution of nodes within the space. We assume distinct regions of different densities, where the dense regions are the `clusters'. We find that the local regions behave almost as if generated by independent SPA models of parameters derived from the densities. Many earlier results from the SPA Model then translate easily to this inhomogeneous version and we begin the process of uncovering the geometry using link analysis.

In the remainder of this section, we first review related work, and then we give a formal definition of the SPA model. In Section~\ref{sec:graphprop} we state our main theoretical results. In particular, we give the typical behaviour of the in-degree of a node, and use this to derive a relationship between spatial distance and number of common neighbours of a pair of nodes. The proofs of the theorems are given in Section~\ref{sec:proofs}. 

In Section~\ref{sec:Sim} we verify the asymptotic results from Section~\ref{sec:graphprop} through a simulation of the SPA model to generate large graphs. Specifically, we show how the relationship between spatial distance and common neighbours can be used to devise a distance estimator which gives precise results. We also use the theoretical results to estimate the local density around a node. Our simulations show that these estimators give reliable results on the simulated data.
 
\subsection{Background and Related Work}

Efforts to extract node information through link analysis began with a heuristic quantification of entity similarity: numerical values, obtained from the graph structure, indicating the relatedness of two nodes. Early simple  measures of entity similarity, such as the Jaccard coefficient~\cite{jaccard1}, gave way to iterative graph theoretic measures, in which two objects are similar if they are related to similar objects, such as SimRank~\cite{simrank}. Many such measures also incorporate co-citation, the number of common neighbours of two nodes, as proposed in the context of bibliographic research in an early paper by Small~\cite{small}. In~\cite{Hoff}, the authors make inferences on the social space for nodes in a social network, using Bayesian methods and maximum likelihood.

Generative spatial models were proposed in a more general setting, where the main objective was to generate graphs with properties that correspond to those observed in real-life networks. Different approaches were explored, for example in~\cite{threshbrad} using thresholds, or in~\cite{geopref1, geopref2} using a geometric variant of the preferential attachment. Graph properties of this model were analyzed by Jordan in~\cite{jordan_seq}; follow-up work on this model can be found in~\cite{jordanwade}. In~\cite{jordan_nonuni}, a non-uniform distribution of the points in space is considered. In~\cite{jacob1}, Jacob and M\"{o}rters propose a probabilistic spatial model where the link probability is a function decreasing with distance. The setting is general, and includes the SPA model as a special case. Follow-up work on this model can be found in~\cite{jacob3}. 

The SPA model was first proposed in~\cite{spa1} as a model for the World Wide Web. In~\cite{spa1} and~\cite{typSPA}, it was proved that the SPA model produces graphs with certain graph properties that correspond to those observed in real-life networks.  The authors' previous paper,~\cite{geo1}, used common neighbours to explore the underlying geometry of the SPA model and quantify node similarity based on distance in the space.  However, the distribution of nodes in space was assumed to be uniform. The approach used in this paper is similar to that in~\cite{geo1}, but we investigate the complications that arise when the distribution is non-uniform, which is clearly a more realistic setting.

An earlier version of this work, containing no proofs, was presented at the workshop WAW 2013. An extended abstract can be found in~\cite{waw2013version}.

\subsection{The Inhomogeneous SPA Model}

We begin with a brief description of our inhomogeneous SPA model. The model presented here is a generalization of the SPA model introduced in~\cite{spa1}, the main difference being that we allow for an inhomogeneous distribution of nodes in the space. 

Let $S$ be the unit hypercube in $\mathbb{R}^m$, equipped with the torus metric derived from the Euclidean norm, or any equivalent metric. The nodes $\{ v_t\}_{t=1}^n$ of the graphs produced by the SPA model are points in $S$ chosen via an $m$-dimensional point process.  Most generally, the process is given by a probability density function $\rho$; $\rho$ is a measurable function such that $\int_{S} \rho d\mu = 1$.  Precisely, for any measurable set $A\subseteq S$ and any $t$ such that $1 \le t \le n$, $\pp(v_t \in A) = \int_{A} \rho d\mu.$

In fact, we will restrict ourselves to probability functions that are {\sl locally constant}. Precisely, we assume that the space $S=[0,1)^m$ is divided into $k^m$ equal sized hypercubes, where $k$ is a constant natural number. Each hypercube is of the form $I_{j_1}\times I_{j_2}\times \dots \times I_{j_m}$ ($0 \le j_1, j_2, \ldots, j_m < k$), where $I_{j}=[j/k,(j+1)/k)$.  Note that any density function $\rho$ can be approximated by such a locally constant function, so that this restriction is justified. 

To keep notation as simple as possible, we assume that each hypercube is labelled $\reg_\ell$, $1\leq \ell\leq k^m$. Let $\rho_\ell$ be the density of $\reg_\ell$, so the density function has value $\rho_\ell$ on $\reg_\ell$. For any node $v$, let $\reg(v)$ be the hypercube containing $v$, and let $\rho(v)$ be the density of $\reg(v)$. Clearly, every hypercube has volume $k^{-m}$. Then the probability that a node $v_t$, introduced at time $t$, falls in $\reg_\ell$ equals $q_\ell = \rho_\ell k^{-m}$, and the expected number of points in $\reg_\ell$ equals $q_\ell n = \rho_\ell k^{-m} n$. It is easy to see that $\sum_\ell q_\ell=1$. Thus we may model the point process as follows: at each time step $t$, one of the regions is chosen as the destination of $v_t$; region $\reg_\ell$ is chosen with probability $q_\ell$. Then, a location for $v_t$ is chosen uniformly at random from the chosen region $\reg_\ell$.

The SPA model generates stochastic sequences for graphs $\{G_t\}_{t\geq 0}$; for each $t \ge 0$, $G_t=(V_t, E_t)$, where $E_t$ is an edge set, and $V_t \subseteq S$ is a node set. The in-degree of a node $v$ at time $t$ is given by $\deg^-(v,t)$. Likewise the out-degree is given by $\deg^+(v,t)$. The sphere of influence $S(v,t)$ of a node $v$ at time $t$ is defined as the ball, centred at $v$, with total volume
$$
|S(v,t)| = \frac{A_1 \deg^-(v,t) + A_2}{t},
$$
where $A_1, A_2>0$ are given parameters. If $(A_1 \deg^-(v,t) + A_2)/t \geq 1$, then $S(v,t) = S$ and so $|S(v,t)| = 1$. We impose the additional restriction that $pA_1\max_j \rho_j < 1$; this avoids regions becoming too dense. This property will be always assumed. The generation of a SPA model graph begins at time $t = 0$ with $G_0$ being the null graph. At each time step $t \geq 1$ (defined to be the transition from $G_{t-1}$ to $G_t$), a node $v_t$ is chosen from $S$ according to the given spatial distribution, and added to $V_{t-1}$ to form $V_t$.  Next, independently, for each node $u\in V_{t-1}$ such that $v_t \in S(u,t-1)$, a directed link $(v_{t},u)$ is created with probability $p$, $p \in (0,1)$ being another parameter of the model. 

Let $\delta(v)$ be the distance from $v$ to the boundary of $\reg(v)$.  Let $r(v,t)$ be the radius of the sphere of influence of node $v$ at time $t$. So if $r(v,t)\leq \delta(v)$, then $S(v,t)$ is completely contained in $\reg(v)$ at time $t$. We see that 
$$
r(v,t) = \left( {|S(v,t)|}/{c_m} \right)^{1/m} = \left(\frac{A_1 \deg^-(v,t) + A_2}{c_m t}\right)^{1/m},
$$
where $c_m$ is the volume of the unit ball; for example, in 2-dimensions with the Euclidean metric, $c_2 = \pi$. 

As typical in random graph theory, we shall consider only asymptotic properties of $G_n$ as $n\rightarrow \infty$. We say that an event in a probability space holds \emph{asymptotically almost surely} (\aas) if its probability tends to one as $n$ goes to infinity. We emphasize that the notations $o(\cdot)$ and $O(\cdot)$ refer to functions of $n$, not necessarily positive, whose growth is bounded. Since we aim for results that hold a.a.s., we will always assume that $n$ is large enough. 

\section{Graph properties of the SPA model}\label{sec:graphprop}

In this section we investigate typical properties of  graphs  produced by the inhomogeneous SPA model, aiming to  use the results to infer the spatial distances between the nodes. A central observation is that in the inhomogeneous SPA model with a locally constant density function, the probability of an edge forming from a new node $v_t$ to an existing node $v$ at time $t$ equals
\[
\pp \Big( (v_t,v) \in E(G_n) \Big) =p\int_{S(v,t)} \rho d\mu =p \sum_\ell \rho_\ell \,|S(v,t) \cap \reg_\ell|.
\]

In the analysis of the original SPA model from \cite{spa1}, we find that  spheres of influence of nodes that are born early typically shrink rapidly, while nodes born late start with small spheres of influence. A node would have to be quite close to the boundary of its region with another one for the effect of any other region to be felt. With this assumption, the expression for the link probability is very similar to that of the link probability of the original SPA model. Therefore, it seems reasonable to expect that the graph formed by nodes in a region $\reg_{\ell}$ with  local density $\rho_{\ell}$ behaves like an independent SPA model of density $\rho_{\ell}$. Our results will show that this expectation is justified and can be made rigorous.

To be specific, assume that nodes in the SPA model do not arrive at fixed, discrete, time instances $t$, but instead arrive according to a homogeneous Poisson process with rate 1. (This will not significantly change the analysis but is a convenient assumption.) Then, the process inside a region $\reg$ with density $\rho$ will behave like the SPA model with the same parameters $A_1$, $A_2$ and $p$, but with points arriving according to a Poisson process with rate $\rho$. This means that in each time interval we expect $\rho$ points to arrive, and the expected time interval between arrivals equals $1/\rho$.  If we use $v_t$ to denote the $t$-th node arriving, then the arrival time $a(t)$ of $v_t$ is approximately $t/\rho$, and thus the volume of the sphere of influence of an existing node $v$ at the time that $v_t$ is born equals 
\[
|S(v,a(t))|=\frac{A_1\deg^-(v,a(t))+A_2}{a(t)}\approx \frac{\rho A_1\deg^-(v,a(t))+\rho A_2}{t}.
\]
Thus, in the analysis of the degree of an individual node, we expect a node $v$ in the inhomogeneous SPA model to behave like a node in the original SPA model with parameters $\rho(v)A_1$, $\rho(v)A_2$ instead of $A_1$, $A_2$, where the degree of node $v$ at time $t$ in the inhomogeneous SPA model corresponds to the degree of a node at time $a(t)$ in the corresponding SPA model. The following theorems show that this is indeed the case.

\begin{theorem}\label{degconc}
Let $\omega=\omega(n)$ be any function tending to infinity together with $n$, and let $\eps > 0$. The following holds with probability $1-o(n^{-1})$. For every node $v$ for which 
$$
\deg^-(v,n)=k=k(n) \geq \omega^2 \log n
$$
and for which 
\begin{equation}\label{eqn:deltabound}
\delta(v) \geq (1 + \eps)\left(\frac{A_1 k + A_2}{c_m n}\right)^{1/m}=(1+\eps)r(v,n),
\end{equation} 
it holds that for all values of $t$ such that $\max\{t_v,T_v\} \le t \le n$, 
\begin{equation}\label{eqn:deg}
\deg^-( v,t) = (1+o(1)) k \left(\frac{t}{n}\right)^{p\rho (v)  A_1}.
\end{equation}
Times $T_v$ and $t_v$ are defined as follows: 
\begin{equation}\label{eqn:tv}
T_v=n \left(\frac{\omega \log n}{k}\right)^{\frac{1}{p \rho (v) A_1}},\quad
t_v= (1+\eps)\left(\frac{A_1k}{\delta(v)^m c_m n^{p\rho (v) A_1}}\right)^{\frac{1}{1-p\rho(v) A_1}}.
\end{equation}
\end{theorem}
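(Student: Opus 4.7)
The plan is to treat the degree process $\deg^-(v,t)$ as a Markov chain whose dynamics, once the sphere of influence is entirely inside $\reg(v)$, depend only on the local density $\rho(v)$. The starting observation is that conditional on the filtration $\mathcal{F}_{t-1}$ and on the event $\{r(v,t-1)\le \delta(v)\}$, the probability of a new edge $(v_t,v)$ being added is exactly
$$
p\rho(v)|S(v,t-1)| \;=\; \frac{p\rho(v)(A_1\deg^-(v,t-1)+A_2)}{t-1},
$$
so the resulting recurrence for the conditional mean matches that of the classical SPA model but with $A_1,A_2$ replaced by $\rho(v)A_1,\rho(v)A_2$.

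First I would carry out the standard martingale normalization. Writing $\alpha=p\rho(v)A_1$ and choosing $f(t)$ as the telescoping product $\prod_s (1+\alpha/(s-1))^{-1}$ (so that $f(t)\sim f(t_0)(t_0/t)^{\alpha}$), the process $M_t = f(t)\bigl(\deg^-(v,t)+A_2/A_1\bigr)$ is a martingale on the event that $S(v,t)\subseteq\reg(v)$. Since each single-step increment of the unnormalized degree is at most $1$, the martingale differences are of order $f(t)$, and Azuma--Hoeffding yields a deviation bound of order $\sqrt{\log n\cdot k}$ in absolute terms, hence $\sqrt{\log n/k}$ relative to the target value $k(t/n)^{\alpha}$. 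The hypothesis $k \ge \omega^2\log n$ is exactly what forces this relative error to be $o(1)$ at time $n$, and inductively at every earlier time down to $T_v$, where $T_v$ is defined so that the target degree $k(T_v/n)^{\alpha}$ equals $\omega\log n$, the degree floor at which Azuma still beats the trajectory.

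The main obstacle is justifying that the sphere of influence stays inside $\reg(v)$ throughout the window $[\max\{t_v,T_v\},n]$: this is needed both to write down the recurrence and to invoke the martingale. Hypothesis \eqref{eqn:deltabound} guarantees containment at $t=n$ because $r(v,n) \le (1+\eps)^{-1}\delta(v)$. Since $\alpha<1$ by the standing assumption $pA_1\max_j\rho_j<1$, and the degree grows only like $t^{\alpha}$, the radius $r(v,t)$ is asymptotically decreasing in $t$; therefore containment at one time forces containment at all later times, provided the concentration estimate holds. The threshold $t_v$ is defined precisely so that, on the target trajectory $\deg^-(v,t)\approx k(t/n)^{\alpha}$, the radius $r(v,t_v)$ equals $\delta(v)/(1+\eps)$, i.e.\ containment just kicks in. To make the circular dependence between concentration and containment rigorous, I would use a stopping time: run the martingale only until the first $t$ at which either containment fails or $\deg^-(v,t)$ deviates from $k(t/n)^{\alpha}$ by more than a factor $1+o(1)$, apply Azuma to the stopped process, and then verify that this stopping time exceeds $n$ with the required probability.

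Finally, the conclusion must hold simultaneously for all nodes with failure probability $o(n^{-1})$. I would take a union bound over the at most $n$ candidate nodes $v$, and a further union bound over an $O(\log n)$-size dyadic grid of times $t\in[\max\{t_v,T_v\},n]$, interpolating between grid points using monotonicity of $\deg^-(v,\cdot)$. The per-node-per-scale failure probability must therefore be $o(n^{-2}/\log n)$, which is precisely what the extra power of $\omega$ in the hypothesis $k\ge \omega^2\log n$ (rather than merely $\omega\log n$) supplies through the Azuma--Hoeffding exponent.
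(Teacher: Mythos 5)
Your overall architecture --- a stopping time coupling the containment event $S(v,t)\subseteq\reg(v)$ with the concentration event, a dyadic decomposition of the time window, and a union bound over vertices and scales --- is essentially the skeleton of the paper's proof (the paper's Lemma on the interval $[T,2T]$ uses exactly such a stopping time $T_0$, and then iterates over $T,2T,4T,\dots$, tracking the product of per-block error factors $\prod_i(1+\eps_i)$). The containment bookkeeping, the identification of $T_v$ and $t_v$ as the times where the target trajectory hits degree $\omega\log n$ and radius $\approx\delta(v)$, and the final union bound are all sound.

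The genuine gap is in the concentration step. You invoke Azuma--Hoeffding with the bounded-difference constants: ``each single-step increment of the unnormalized degree is at most $1$, so the martingale differences are of order $f(t)$,'' and conclude a deviation of order $\sqrt{k\log n}$. This is not what bounded-difference Azuma gives. The variance proxy in Azuma is $\sum_t f(t)^2$, which is governed by the \emph{number of time steps}, not by the degree: normalizing so that $f(n)=1$, one gets $\sum_{t=T_v}^{n} f(t)^2 \geq n-T_v = (1-o(1))n$, hence a deviation of order $\sqrt{n\log n}$, which is $o(k)$ only when $k\gg\sqrt{n\log n}$ --- far stronger than the hypothesis $k\geq\omega^2\log n$. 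The same failure occurs block-by-block: on $[T,2T]$ there are $T$ trials, so bounded-difference Azuma gives $\sqrt{T\log n}$, while the degree there is only about $d=k(T/n)^{p\rho A_1}\ll T$. What saves the argument is that each increment, though bounded by $1$, succeeds with probability only $\Theta(\deg^-(v,t)/t)$, so the \emph{conditional variance} summed over a block is $O(d)$, not $O(T)$. You therefore need a variance-sensitive inequality: either Freedman's martingale inequality with quadratic variation $O(d)$, or (as the paper does) stochastically dominate the increments over $[T,T_0-1]$ by independent Bernoulli variables whose success probabilities follow the deterministic upper-bound trajectory, and apply the Chernoff bound of Lemma~\ref{lem:Chernoff}, whose exponent involves $\E X = O(d)$ rather than the number of trials. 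With that substitution the per-block relative error becomes $O(\sqrt{\log n/d})$, the accumulated product over the $O(\log n)$ dyadic scales telescopes to $1+o(1)$ because $d$ grows geometrically along the blocks, and the rest of your argument goes through.
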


Condition~(\ref{eqn:deltabound}) on $\delta(v)$ ensures that at time $n$, $S(v,n)$ is completely contained in $\reg(v)$ (deterministically). In fact, due to the additional multiplicative factor of $(1+\eps)$, $S(v,n)$ is some distance removed from the boundary of $\reg(v)$. The expression for $T_v$ is chosen so that at this time node $v$ has \aas~at least $\omega \log n$ neighbours. Likewise, $t_v$ is chosen such that at this time \aas~the sphere of influence has shrunk so that its radius is sufficiently smaller than $\delta (v)$, again with some extra room to spare. The implication of this theorem is that once a node accumulates at least $\omega \log n$ neighbours and its sphere of influence has shrunk so that it does not intersect neighbouring regions, its behaviour can be predicted with high probability until the end of the process, and is completely governed by its region, and no others. In particular, it follows that from time $\max\{t_v,T_v\}$ onwards the sphere of influence is completely contained in $\reg(v)$.  

For most vertices, the moment when they first achieve $\omega\log n$ neighbours ($T_v$,) will come before the moment that their sphere of influence has shrunk so that it is well contained in the region ($t_v$). Indeed, consider a vertex $v$ of degree at least $\omega \log n$ for which this is not the case. Let $T$ be the moment when the vertex reaches in-degree $\omega \log n$. By definition, the sphere of influence of $v$ at this time $T$ has a radius of influence of order $\left(\frac{\omega\log n}{T}\right)^{1/m} $. If $ T\gg \omega\log n$, then the radius is $o(1)$, and the probability that $v$ is this close to the border is also $o(1)$. The only vertices for which potentially the radius at time $T$ could be fairly large are those vertices for which  $T=O(\omega\log n) $. Thus, these are the oldest vertices. These vertices do have high degree, but their spheres of influence still tend to shrink over time, so most of their edges will be acquired after time $t_v$, that is, when their sphere of influence has shrunk to be contained in the region. 

We can use the results on the degree to show that each graph induced by one of the regions $\reg_{\ell}$  has a power law degree distribution.
Let $N_{\ell}(j,n)$ denote the number of nodes of degree $j$ at time $n$ in the region $\reg_{\ell}$. The proof of the following result is a straightforward adaptation of the differential equations method used to prove the counterpart result for the uniform model (see~\cite{spa1}). Since this theorem is not needed to prove the main result of this paper, the proof is omitted here.

\begin{theorem}\label{thm:powerlaw}
A.a.s.\ the graph induced by the nodes in region  $\reg_{\ell}$ has  a power law degree distribution with coefficient $1 + 1/(p\rho_{\ell}A_1)$. Precisely, \aas\ for any $1 \le \ell\le k^m$ there exists a constant $c_{\ell }$ such that for any $1 \ll j \le j_f= \left( n/\log^8 n \right)^{\frac{p \rho_{\max} A_1}{4p\rho_{\max} A_1+2}}$,
\begin{equation*}
N_{\ell}(j,n) =  (1+o(1))c_{\ell} j^{-(1 + \frac{1}{p\rho_\ell A_1})} q_{\ell} n.
\end{equation*}
Moreover,  \aas\ the entire graph generated by the inhomogeneous SPA model has a degree distribution whose tail follows a power law with coefficient  $1 + 1/(p\rho_{\max} A_1)$.
\end{theorem}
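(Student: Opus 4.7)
The plan is to adapt the differential-equations method from~\cite{spa1} to each region separately. The enabling structural observation, already stated in the paragraph preceding Theorem~\ref{degconc}, is that whenever a node $v\in\reg_\ell$ has its sphere of influence contained in $\reg_\ell$, the probability that it gains a new in-link at step $t+1$ equals exactly $p\rho_\ell(A_1\deg^-(v,t)+A_2)/t$. Combined with Theorem~\ref{degconc}---which guarantees \aas\ that every sufficiently high-degree node's sphere of influence stays inside its home region from time $\max\{T_v,t_v\}$ onwards---this lets us treat $\reg_\ell$ essentially as an independent SPA process with effective parameters $\rho_\ell A_1,\rho_\ell A_2$, in which a $q_\ell$-fraction of all arrivals land.

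For each $\ell$ and $j$ I would first write the conditional one-step expected change
\[
\E[N_\ell(j,t+1)-N_\ell(j,t)\mid G_t] = q_\ell\,\mathbf{1}[j=0] + \frac{p\rho_\ell(A_1(j-1)+A_2)}{t}\,N_\ell(j-1,t) - \frac{p\rho_\ell(A_1 j+A_2)}{t}\,N_\ell(j,t) + E_\ell(j,t),
\]
where the error $E_\ell(j,t)$ collects (i) nodes in $\reg_\ell$ whose spheres currently cross $\partial\reg_\ell$, and (ii) the exceptional events excluded by Theorem~\ref{degconc}. The nodes in class (i) lie within distance $O((j/t)^{1/m})$ of $\partial\reg_\ell$, and by the uniform placement within each region their expected number at time $t$ is $O(j^{1/m}t^{1-1/m})$; class (ii) contributes $o(1)$ by the $1-o(n^{-1})$ bound. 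A routine check shows both errors are of lower order than the main terms throughout the range $j\le j_f$.

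Substituting the stationary ansatz $N_\ell(j,t)\approx c_\ell(j)\,q_\ell\,t$ into the recurrence produces the two-term recursion
\[
c_\ell(j)\,(1+p\rho_\ell(A_1 j+A_2)) = \mathbf{1}[j=0] + p\rho_\ell(A_1(j-1)+A_2)\,c_\ell(j-1),
\]
whose explicit solution is a ratio of Gamma functions; Stirling's formula then yields $c_\ell(j)=(1+o(1))\,c_\ell\,j^{-(1+1/(p\rho_\ell A_1))}$ for a suitable constant $c_\ell=c_\ell(p,A_1,A_2,\rho_\ell)$, establishing the expected-value asymptotics. Concentration of $N_\ell(j,n)$ around its mean is then obtained by the standard vertex-exposure Doob martingale with Azuma--Hoeffding: each arriving node shifts $N_\ell(j,\cdot)$ by an $O(1)$ amount, so deviations are $O(\sqrt{n\log n})$ \aas, and the cutoff $j_f$ is chosen precisely so that $\E N_\ell(j_f,n)$ still dominates this bound. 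A union bound over $j\le j_f$ and the $k^m$ regions completes the per-region claim, and the overall-graph tail statement then follows by summing the per-region asymptotics and noting that, among all $k^m$ regions, the heaviest tail (smallest exponent) is contributed by the region(s) of maximum density.

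The main obstacle I anticipate is controlling the boundary error $E_\ell(j,t)$ uniformly across all time scales, since in principle these errors can accumulate through the recursion and distort the leading constant $c_\ell$. This is handled by verifying that the boundary contribution is strictly of lower order at every relevant $t$ and that its cumulative effect on $N_\ell(j,n)/n$ remains $o(1)$; everything else is a mechanical rewriting of the homogeneous-SPA analysis of~\cite{spa1} with $A_i$ replaced by $\rho_\ell A_i$ and with arrival rate $1/t$ replaced by $q_\ell/t$.
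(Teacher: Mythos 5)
Your route is the one the paper itself points to: the statement is proved in the paper only by reference, as ``a straightforward adaptation of the differential equations method used to prove the counterpart result for the uniform model'' in~\cite{spa1}, and your master equation for $N_\ell(j,t)$ with the substitutions $A_i\mapsto \rho_\ell A_i$ and arrival rate $q_\ell$, the stationary ansatz, and the Gamma-function asymptotics for $c_\ell(j)$ are exactly that adaptation; the exponent $1+1/(p\rho_\ell A_1)$ comes out correctly from your recursion. Two points deserve more care than your sketch gives them. First, the concentration step: the claim that ``each arriving node shifts $N_\ell(j,\cdot)$ by an $O(1)$ amount'' is not enough for Azuma--Hoeffding applied to the Doob martingale $\E[N_\ell(j,n)\mid G_t]$, because resampling the randomness at step $t$ changes which spheres of influence grow and the effect can cascade to later steps; the out-degree of the new node is also not deterministically bounded. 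This is precisely where the technical work in~\cite{spa1} lives (a supermartingale/Wormald-type argument with a self-correcting drift, or a coupling bounding the cascade), and your proof would need to import that machinery rather than cite vertex exposure. Second, your boundary-error bound counts \emph{all} nodes within distance $O((j/t)^{1/m})$ of $\partial\reg_\ell$, giving $O(j^{1/m}t^{1-1/m})$; multiplied by the per-node probability error $O(j/t)$ and summed over $t$, this does not beat the main term $j^{-(1+1/(p\rho_\ell A_1))}n$ over the whole range $j\le j_f$ (already for $m=2$ the crude bound forces a slightly smaller cutoff). You need to restrict the count to boundary nodes of degree $j-1$ or $j$ --- which costs an extra factor of order $j^{-(1+1/(p\rho_\ell A_1))}$ and does close the gap --- or route the boundary control through Corollary~\ref{cor:upper} and the $Y^\ell_t$ estimate as in the proof of Theorem~\ref{thm:noEdges}. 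Neither issue is a wrong idea, but both are the places where ``straightforward adaptation'' hides real work.
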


The number of edges also validates our hypothesis that a region of a certain density behaves almost as a uniform SPA model with adjusted parameters. In the original SPA model with parameters $A_1$ and $A_2$ replaced by $\rho A_1$, $\rho A_2$ and $p$, the average out-degree is approximately $\frac{p\rho A_2}{1-p \rho A_1}$, as per~\cite[Theorem 1.3]{spa1}. The following theorem shows that the subgraph induced by one of the regions has the equivalent expected number of edges. This theorem also shows that \aas\ the number of edges that cross the boundary of a region is of smaller order than the number of edges completely contained in that region. Thus, almost all edges have both endpoints in the same region. 

\begin{theorem}\label{thm:noEdges}
A.a.s., for all regions $\reg_{\ell}$ of density $\rho_{\ell}$,  $|V(G_n)\cap \reg_\ell |=(1+o(1))q_{\ell}n$.
Moreover,
$$
\E ( \{ (u,v)\in E(G_n)\, |\, u,v\in\reg_\ell\}|) = (1+o(1))\frac{p\rho_{\ell}A_2}{1-p\rho_{\ell}A_1}q_{\ell}n.
$$
Furthermore, \aas
$$|\{(u,v) \in E(G_n) :\reg(u) \neq \reg(v)\}| = o(n).$$
\end{theorem}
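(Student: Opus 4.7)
The plan is to prove the three assertions in sequence, with the third (cross-region edges) feeding into the second (within-region edges). For the first, the positions $v_t$ are i.i.d.\ with $\pp(v_t \in \reg_\ell) = q_\ell$, so $|V(G_n) \cap \reg_\ell| \sim \mathrm{Bin}(n, q_\ell)$, and a Chernoff bound yields $|V(G_n) \cap \reg_\ell| = (1+o(1))\, q_\ell n$ with probability $1 - e^{-\Omega(n)}$; a union bound over the $k^m$ (constantly many) regions gives the a.a.s.\ assertion.

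For the third assertion I would bound $\E(X) = o(n)$, where $X$ counts cross-region edges, and apply Markov's inequality. By linearity and the link rule,
\[
\E(X) \le p\,\rho_{\max} \sum_{t,u} \E\,|S(u,t-1) \setminus \reg(u)|,
\]
so the task is to bound the total leakage of spheres outside their home regions. I would partition vertices according to their final in-degree $k = \deg^-(v,n)$ and boundary distance $\delta(v)$ into: (i) \emph{typical} vertices satisfying both hypotheses of Theorem~\ref{degconc} (so $k \ge \omega^2 \log n$ and $\delta(v) \ge (1+\eps) r(v,n)$); (ii) \emph{near-boundary} vertices failing the $\delta(v)$ condition; and (iii) \emph{low-degree} vertices failing the $k$ condition. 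For (i), Theorem~\ref{degconc} forces $r(v,t) \le \delta(v)$ for all $t \ge \max\{T_v, t_v\}$, so leakage is confined to the short transient window and can be integrated along the degree trajectory. For (ii), the $\rho_{\max}$-weighted volume of the boundary strip of width $\sim n^{-(1-p\rho_{\max}A_1)/m}$ around region boundaries is $o(1)$, so these vertices number $o(n)$; a refinement splitting by birth time shows their aggregate contribution to cross edges remains $o(n)$. For (iii), the total in-degree of low-degree vertices is $O(\omega^2 n \log n)$, but only an $o(1)$ fraction of their edges can be cross-region since their spheres remain small throughout.

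For the second assertion, set $E_\ell(t) = \E\,|\{(u,v) \in E_t: u,v \in \reg_\ell\}|$. The expected increment at step $t$ is
\[
E_\ell(t) - E_\ell(t-1) = p\rho_\ell \sum_{u \in V_{t-1} \cap \reg_\ell} \E\,|S(u,t-1) \cap \reg_\ell|.
\]
Using Theorem~\ref{degconc} to replace $|S(u,t-1) \cap \reg_\ell|$ by $|S(u,t-1)| = (A_1 \deg^-(u,t-1) + A_2)/(t-1)$ for typical $u$, the third assertion to approximate $\sum_{u \in \reg_\ell} \deg^-(u,t-1)$ by $(1+o(1))\,E_\ell(t-1)$ (almost all in-edges to vertices in $\reg_\ell$ come from within $\reg_\ell$), and the first assertion to approximate $|V_{t-1} \cap \reg_\ell|$ by $(1+o(1))\,q_\ell(t-1)$, we obtain
\[
E_\ell(t) = E_\ell(t-1)\left(1 + \frac{p\rho_\ell A_1}{t-1}\right) + p\rho_\ell A_2 q_\ell + o(1).
\]
Comparison with the ODE $f'(t) = (p\rho_\ell A_1/t)\,f(t) + p\rho_\ell A_2 q_\ell$ (solvable since $p\rho_\ell A_1 < 1$ by hypothesis) gives $E_\ell(t) = (1+o(1))\,\frac{p\rho_\ell A_2}{1-p\rho_\ell A_1}\, q_\ell t$; evaluating at $t=n$ finishes.

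The main obstacle is class~(i) in the cross-edge bound, specifically estimating $\sum_{t \le \max\{T_v, t_v\}} |S(v,t-1) \setminus \reg(v)|$ summed over typical $v$. This uses the geometric fact that a ball whose radius only slightly exceeds $\delta(v)$ loses only a small fraction of its volume across a face of the hypercube, combined with control of the early degree trajectory in the regime where Theorem~\ref{degconc}'s conclusion does not yet apply. The remainder of the proof is routine SPA-style recurrence analysis and standard tail bounds.
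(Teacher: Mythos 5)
Your overall architecture matches the paper's: Chernoff plus a union bound for the vertex counts, a bound on the expected total ``leakage'' $\sum_{t,u}\E|S(u,t)\setminus\reg(u)|$ to control cross-region edges, and a first-order recurrence (your ODE comparison is equivalent to the paper's use of the Chung--Lu sequence lemma) for the within-region edge count. The recurrence step is fine, including the identification $\sum_{u\in\reg_\ell}\deg^-(u,t)=M^\ell_t+Z^\ell_t$ with $Z^\ell_t=o(t)$. The problem is that the one genuinely hard estimate --- the leakage bound --- is exactly the part you defer, and your trichotomy handles it incorrectly in one case and incompletely in the others.

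Concretely: for your class (iii), the claim that ``only an $o(1)$ fraction of their edges can be cross-region since their spheres remain small throughout'' is false as a per-vertex statement. A low-degree vertex sitting on a region boundary has a constant fraction (roughly half) of its sphere outside its region at every time, hence a constant fraction of cross-region edges; since the total in-degree of class (iii) is $O(\omega^2 n\log n)\gg n$, this class cannot be dismissed this way. The saving mechanism is probabilistic, not per-vertex: a vertex $v_i$ leaks at time $t$ only if it lies within $r(v_i,t)$ of the boundary, an event of probability $O(r(v_i,t))$ over the random position, and this must be multiplied by the sphere volume $|S(v_i,t)|$ and summed over $i$ and $t$. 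The paper does exactly this computation once, for \emph{all} vertices simultaneously, using Corollary~\ref{cor:upper} (a universal upper bound $\deg^-(v_i,t)\le\omega\log n\,(t/i)^{p\rho_{\max}A_1}$ that needs no hypothesis on final degree or boundary distance); this yields $\E(Y^\ell_t)=O(\mathrm{polylog}(n)\cdot t^{-\beta})$ with $\beta=\min\{\tfrac1m,(1-\alpha)(1+\tfrac1m)\}>0$, and summing over $t$ gives $o(n)$. Your three-way split is therefore unnecessary, and your class (i) and (ii) treatments (integrating the transient window of Theorem~\ref{degconc}; counting boundary vertices at a single width scale $n^{-(1-p\rho_{\max}A_1)/m}$ and then ``refining by birth time'') each still require essentially this same birth-time-resolved double sum, which you label the main obstacle but do not carry out. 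Until that sum is done, neither the third assertion nor the replacement of $|S(u,t)\cap\reg_\ell|$ by $|S(u,t)|$ in your recurrence is justified.
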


Here we see that we need the condition $p \rho_{\max} A_1 < 1$. If  $p \rho_{\max} A_1 \geq 1$, then the number of edges would grow superlinearly.

Our ultimate goal is to derive the pairwise distances between the nodes in the metric space through an analysis of the graph. The following theorem, obtained using the approach of~\cite{geo1}, provides an important tool. Namely, it links the number of common in-neighbours of a pair of nodes to their (metric) distance. Using this theorem, we can then infer the distance from the number of common in-neighbours. 

The theorem distinguishes three cases. If $u$ and $v$ are relatively far from each other, then they will have no common neighbours. If the nodes are very close, then the number of common neighbours is approximately equal to a fraction $p$ of the degree of the node of smallest degree. The third case provides a `sweet spot' where the number of common neighbours is a direct function of the metric distance and the degrees of the nodes. For any two nodes $u$ and $v$, let $cn(u,v,t)$ denote the number of common in-neighbours of $u$ and $v$ at time $t$.

\begin{theorem}\label{thm:cn2}
Let $\omega=\omega(n)$ be any function tending to infinity together with $n$, and let $\eps > 0$. The following holds \aas\
Let $u$ and $v$ be nodes of final degrees $\deg(u,n)=k$ and $\deg(v,n)=j$ such that $\reg=\reg(u)=\reg (v)$, 
and $k \geq j \ge \omega^2 \log n$.
Let $\rho=\rho(v)=\rho(u)$ and let $T_v = n\left(\frac{\omega \log n}{j}\right)^{\frac{1}{p\rho A_1}}$, and assume that 
\[
\delta(v)^m \geq cj\mbox{ and }\delta (u)^m\geq ck, \ \ \ \ \mbox{ where } 
c=(1+\eps) \left(\frac{A_1}{c_m n^{p\rho A_1}T_{v}^{1-p\rho A_1}}\right).
\]
Let $d(u,v)$ be the distance between $u$ and $v$ in the metric space. Then, we have the following result about the number of common in-neighbours of $u$ and $v$:
\begin{enumerate}
\item[Case 1.] If
$$d(u,v) \ge \eps \left(\frac{(\omega \log n) (k/j)}{T_{v}}\right)^{1/m}$$ 
then 
$
cn(u,v,n)=O(\omega \log n).
$

\smallskip
\item[Case 2.] If 
$$
d(u,v) \le \left( \frac {A_1 k+A_2}{c_m n} \right)^{1/m} - \left( \frac {A_1 j + A_2}{c_m n} \right)^{1/m}
$$
then  $cn(u,v,n) = (1+o(1))p j.$

\smallskip
\item[Case 3.] If 
$$
\left( \frac {A_1 k+A_2}{c_m n} \right)^{1/m} - \left( \frac {A_1 j + A_2}{c_m n} \right)^{1/m} < d(u,v) < \eps \left(\frac{(\omega \log n) (k/j)}{T_{v}}\right)^{1/m},
$$
then 
\begin{equation}
\label{eqn:cn}
cn(u,v,n) = C j n^{-\alpha}k^\alpha  d^{-m\alpha}  \left( 1+o(1)+O \left( \left( \frac{j}{k} \right)^{1/m} \right) \right),
\end{equation}
where 
\[
\alpha=\frac{p\rho A_1}{1-p\rho A_1} \ \ \ \ \ \mbox{ and }\ \ \ \ \ C=pA_1^\alpha c_m^{-\alpha}.
\]
Note that, if $j\ll k$, then we have a precise asymptotic formula for $cn(u,v,n) $. If $j$ and $k$ are approximately equal, then the formula only states that $cn(u,v,n)=\Theta(j n^{-\alpha}k^\alpha  d^{-m\alpha} )$.
\end{enumerate}
\end{theorem}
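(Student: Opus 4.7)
The plan is to write $cn(u,v,n)=\sum_{t=1}^{n}X_{t}$, where $X_{t}$ is the indicator that the vertex $v_{t}$ arriving at time $t$ becomes an in-neighbour of both $u$ and $v$. Conditioning on the history $\mathcal{F}_{t-1}$, and using the hypotheses on $\delta(u),\delta(v)$ to conclude that $S(u,t-1)\cap S(v,t-1)\subseteq\reg$ whenever the intersection is nonempty, we get
$$
\Pr(X_{t}=1\mid\mathcal{F}_{t-1})=p^{2}\rho\,|S(u,t-1)\cap S(v,t-1)|.
$$
I apply Theorem~\ref{degconc} simultaneously to $u$ and $v$: the constant $c$ in the present hypothesis is calibrated exactly so that $t_{u}\le T_{u}$ and $t_{v}\le T_{v}$, and $T_{u}\le T_{v}$ since $k\ge j$; thus on a high-probability event we have, for every $\tau\in[T_{v},n]$,
$$
r(w,\tau)=(1+o(1))\bigl(A_{1}\deg^{-}(w,n)/(c_{m}n^{p\rho A_{1}})\bigr)^{1/m}\tau^{(p\rho A_{1}-1)/m}\qquad(w\in\{u,v\}).
$$
In particular the ratio $r(u,\tau)/r(v,\tau)=(1+o(1))(k/j)^{1/m}$ is constant in $\tau$, and the contribution to $cn(u,v,n)$ from $t<T_{v}$ is at most $\deg^{-}(v,T_{v})=O(\omega\log n)$.

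With this setup, Cases~1 and 2 fall out quickly. In Case~2, the ratio observation together with the fact that $r(v,\tau)$ is minimised at $\tau=n$ upgrades the assumption $d\le r(u,n)-r(v,n)$ to $d\le r(u,\tau)-r(v,\tau)$, equivalently $S(v,\tau)\subseteq S(u,\tau)$, throughout $[T_{v},n]$. Each of the $(1+o(1))j$ in-neighbours of $v$ then lies automatically in $S(u,\cdot)$ at its birth time and independently links to $u$ with probability $p$, so a Chernoff estimate gives $cn(u,v,n)=(1+o(1))pj$. In Case~1, the concentration yields $\deg^{-}(u,T_{v})=(1+o(1))(k/j)\omega\log n$, hence $r(u,T_{v})+r(v,T_{v})=\Theta\bigl(((k/j)\omega\log n/T_{v})^{1/m}\bigr)$, and since $r(u,\tau)+r(v,\tau)$ is nonincreasing in $\tau$, the Case~1 hypothesis (with a suitable constant absorbed into $\eps$) puts $d$ strictly above $r(u,\tau)+r(v,\tau)$ for every $\tau\ge T_{v}$. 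No new common in-neighbours appear past $T_{v}$, so $cn(u,v,n)=O(\omega\log n)$.

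Case~3 is where the real calculation lives. The key move is to solve $r(u,\tau^{*})+r(v,\tau^{*})=d$ for a shut-off time
$$
\tau^{*}=(1+o(1))(A_{1}/c_{m})^{1/(1-p\rho A_{1})}(k^{1/m}+j^{1/m})^{m/(1-p\rho A_{1})}n^{-\alpha}d^{-m/(1-p\rho A_{1})},
$$
with $\alpha=p\rho A_{1}/(1-p\rho A_{1})$. For $\tau>\tau^{*}$ the spheres are disjoint; for $\tau$ comfortably below $\tau^{*}$ one has $r(v,\tau)<r(u,\tau)-d$ and $|S(u,\tau)\cap S(v,\tau)|=|S(v,\tau)|$; between these ranges the overlap is a spherical lens. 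Using $|S(v,\tau)|=(1+o(1))A_{1}j\,\tau^{p\rho A_{1}-1}/n^{p\rho A_{1}}$ and integrating from $T_{v}$ up to $\tau^{*}$,
$$
\E[cn(u,v,n)]=(1+o(1))\frac{pj}{n^{p\rho A_{1}}}(\tau^{*})^{p\rho A_{1}}+O(\omega\log n),
$$
and the identity $p\rho A_{1}(1+\alpha)=\alpha$ together with $(k^{1/m}+j^{1/m})^{m}=k(1+O((j/k)^{1/m}))$ collapses this to the stated $Cjn^{-\alpha}k^{\alpha}d^{-m\alpha}$ with $C=pA_{1}^{\alpha}c_{m}^{-\alpha}$. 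Concentration around the mean then follows from a Chernoff bound on the conditionally independent $X_{t}$.

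The main obstacle is the lens regime $r(u,\tau)-r(v,\tau)<d<r(u,\tau)+r(v,\tau)$. For $j\ll k$ this regime contributes only a small multiplicative correction, because the time window where the lens is the correct description is narrow and the smaller-ball volume $|S(v,\tau)|$ dominates; this is what lets a clean asymptotic emerge. When $j$ and $k$ are comparable, however, the lens window is non-negligible and the full $m$-dimensional spherical-cap formula would be needed for an exact constant; replacing the lens volume by $|S(v,\tau)|$ only yields matching upper and lower bounds up to a constant factor. This is precisely why the conclusion in Case~3 is stated as a precise asymptotic only in the regime $j\ll k$, and merely as a $\Theta$-bound otherwise.
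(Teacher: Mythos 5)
Your proposal is correct and follows essentially the same route as the paper's proof: invoke Theorem~\ref{degconc} from time $T_v$ onward (checking $t_v\le T_v$, $t_u\le T_u\le T_v$ from the hypothesis on $c$), translate degree concentration into the radius formula $r(w,\tau)\propto\tau^{(p\rho A_1-1)/m}$, and settle the three cases by comparing $d$ with $r(u,\tau)\pm r(v,\tau)$, with the separation time $\tau^*$ and the identity $p\rho A_1(1+\alpha)=\alpha$ yielding the Case~3 formula exactly as in the paper. The only cosmetic differences are that you organize the count as $\sum_t p^2\rho|S(u,t)\cap S(v,t)|$ from the outset and use a single shut-off time rather than the paper's pair $t^-\le t^+$ bracketing $t_0$; both absorb the lens regime into the same $O((j/k)^{1/m})$ correction.
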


\section{Reconstruction of Geometry}\label{sec:Sim}

We set out to discover the character of nodes in a network purely through link structure, and to quantify the similarities. Spatial models allow us a convenient definition of similarity: distances between nodes. In examining the SPA model, the number of common neighbours allows us to uncover a good approximation of pairwise distances, a first step in the reconstruction of the geometry. 

\paragraph{Description of Model Used:}

For simulations, we use an inhomogeneous SPA model that we call a \emph{diagonal layout}, which has 4 `clusters' of identical high density, with $m=2$. In the diagonal layout, $k=4$ and the 4 regions $(x,x)$, $1 \leq x \leq 4$, are dense, with the others sparse. We will use `dense region' and `sparse region' to denote the union of all regions with densities $\rho_d$ and $\rho_s$, respectively. For ease of notation, we note that $\frac{1}{16}(4\rho_d +12\rho_s)=1$, so $\rho_s = 4/3-\rho_d/3$. Thus it is enough to provide the value of $\rho_d$ only. In Figure~\ref{fig:diagmod} we see an example of the diagonal layout with nodes and edges, and we also see evidence that the densest region does dominate the power law degree distribution. The yellow line is the prediction for the degree distribution with the power law exponent based on the maximum density, as in Theorem~\ref{thm:powerlaw}.

\begin{figure}[h]
\begin{center}
\includegraphics[scale=0.4]{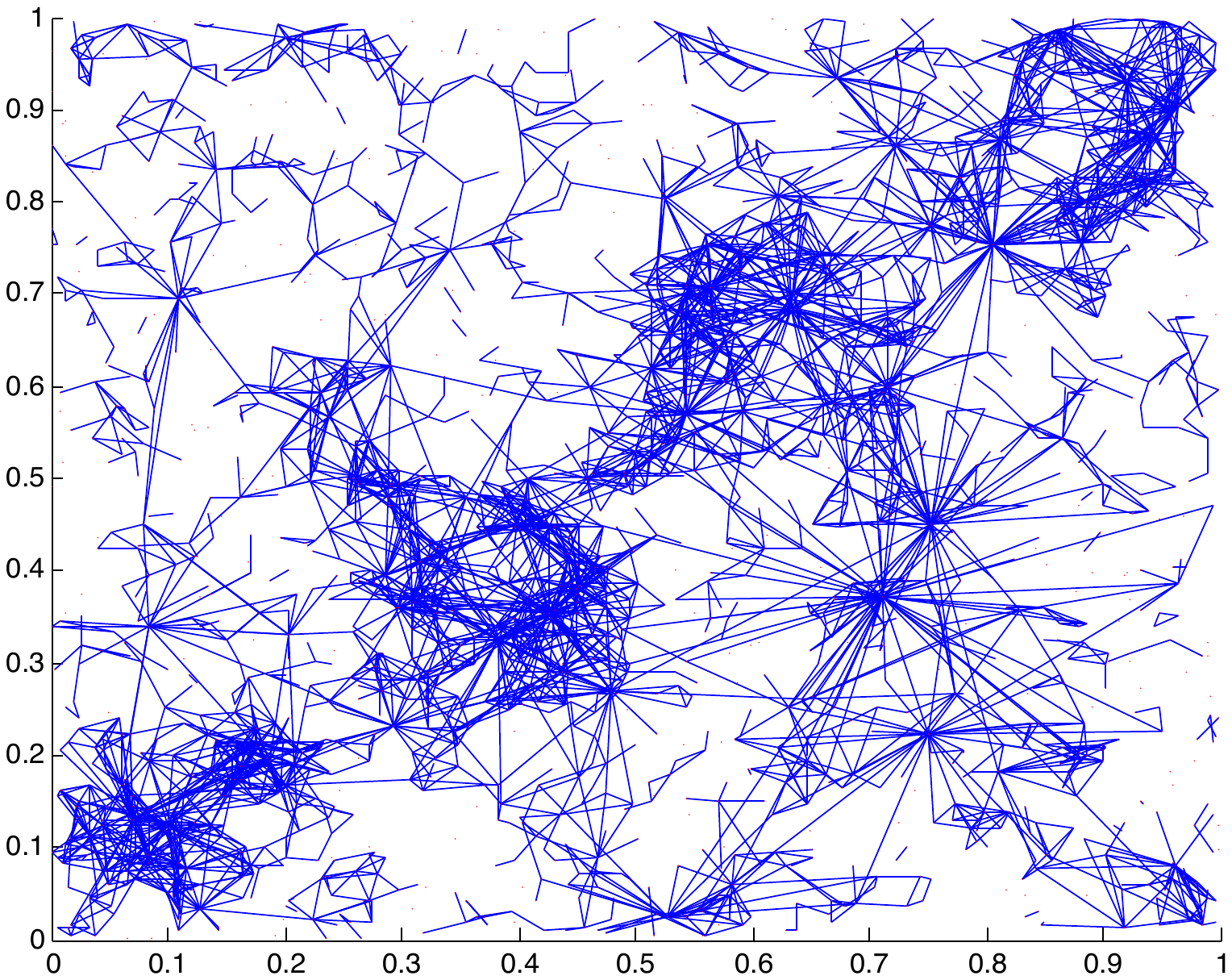}
\includegraphics[scale=0.4]{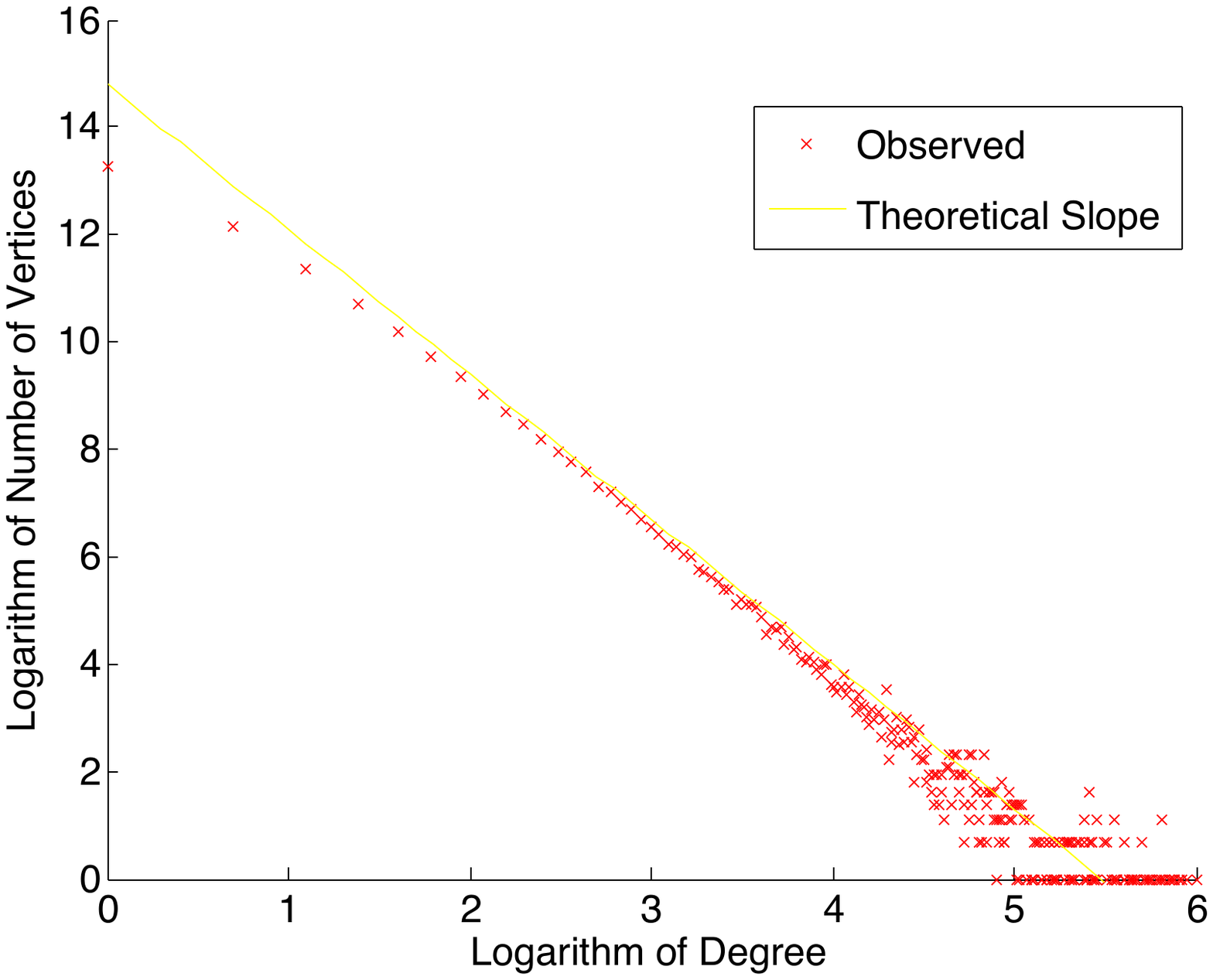}
\end{center}
\caption{Left: diagonal layout, $n=1,000$, $p=0.6$, $\rho_d=1.6$, $A_1 = 0.7$, $A_2 = 2.0$; Right: degree distribution $n=1,000,000$, $p=0.7$, $\rho_d=1.2$, $A_1 = 0.7$, $A_2 = 1.0$.}
\label{fig:diagmod}
\end{figure}

Our estimator for the distance is derived from Case 3 of Theorem~\ref{thm:cn2}, and in particular Equation~(\ref{eqn:cn}), ignoring the error term. This leads to the following formula for the estimated distance $\hat{d}$. For a pair of nodes $u,v$ with $\deg^-(u,n)=k$ and $\deg^-(v,n)=j$,  $k\geq j$, whose distance is such that Case 3 applies, this estimate is given by: 
\begin{equation}\label{eq:DistCN}
\hat{d}(u,v)=\left( \frac{p^{\gamma} A_1 j^\gamma k}{nc_m cn(u,v,n)^\gamma}\right)^{\frac{1}{m}},
\end{equation}
where $\gamma =\frac{1-p\rho (u) A_1}{p\rho( u) A_1}$ (note that $\gamma=\frac{1}{\alpha}$ with $\alpha $ as in Equation (\ref{eqn:cn}).) If the density is uniform, that is, if $\rho(u)=1$ for all $u$, then the above estimator is the same as our original estimator: Equation~(7) from~\cite{geo1}. 

Since a relationship between the spatial distance only exists in Case 3 of Theorem~\ref{thm:cn2}, we try to eliminate pairs to which one of the other cases applies. Pairs which are in Case 1 are very close, and for such pairs, the expected number of common neighbours is $p \deg (v,n) = pj$. In an attempt to avoid this case, we filter out all pairs where the number of common neighbours is greater than $pj/2$. Pairs that are in Case 2 are so far apart that their spheres of influence have overlap for a very short time, if at all. We try to avoid this case by eliminating pairs with 10 or fewer common neighbours. 

To see the effect of the non-uniform density, we first apply the original estimator to our diagonal layout. In other words, we define the estimated distance as in Equation~(\ref{eq:DistCN}), but taking $\rho(u)=1$ for all $u$, and we are applying this estimator to the points obtained from a non-uniform distribution.  The motivation of this experiment is that, when applying our techniques to real-life data, we are not likely to know the local density of a node. Figure~\ref{diag:Est} (left side) gives the estimated versus real distance for a graph with $n=100,000$ nodes, generated via the SPA model from the diagonal layout with parameters $p=0.7$, $\rho_d=1.6$, $A_1 = 0.7$, $A_2 = 2.0$. After filtering as described above, 2,270 pairs are left. 

The figure shows that the approach of assuming uniform density leads to a consistent overestimate of the distance for the nodes. This may seem counterintuitive. The trouble lies with the estimator's assumption about a node's age, which is based on its final in-degree. A node in $\reg_d$ has more neighbours than is expected when one assumes uniform density, and thus the node is thought to be much older than it actually is. This confounds the distance estimator.

\begin{figure}[h]
\begin{center}
\includegraphics[width=0.4\textwidth]{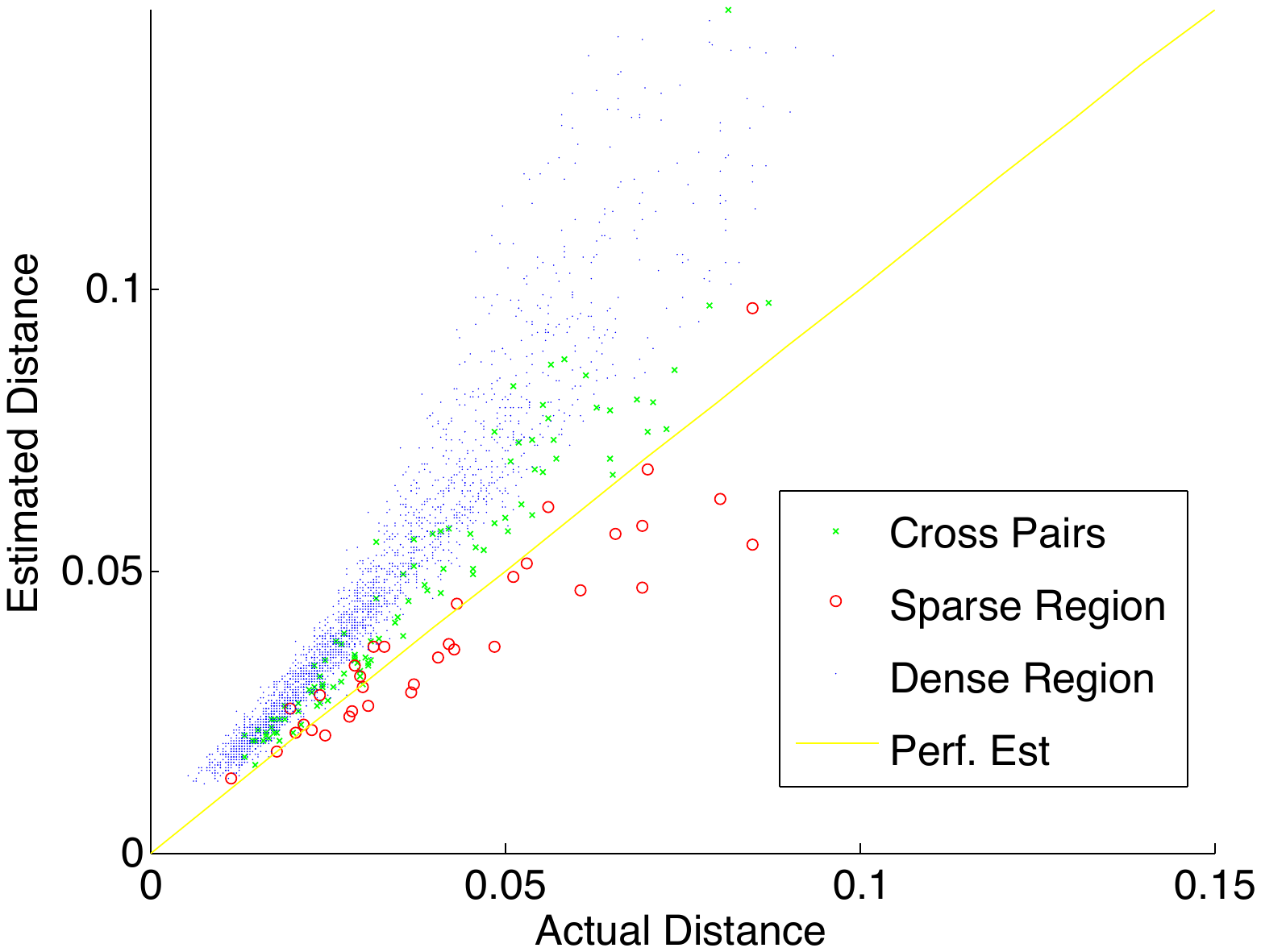}
\includegraphics[width=0.4\textwidth]{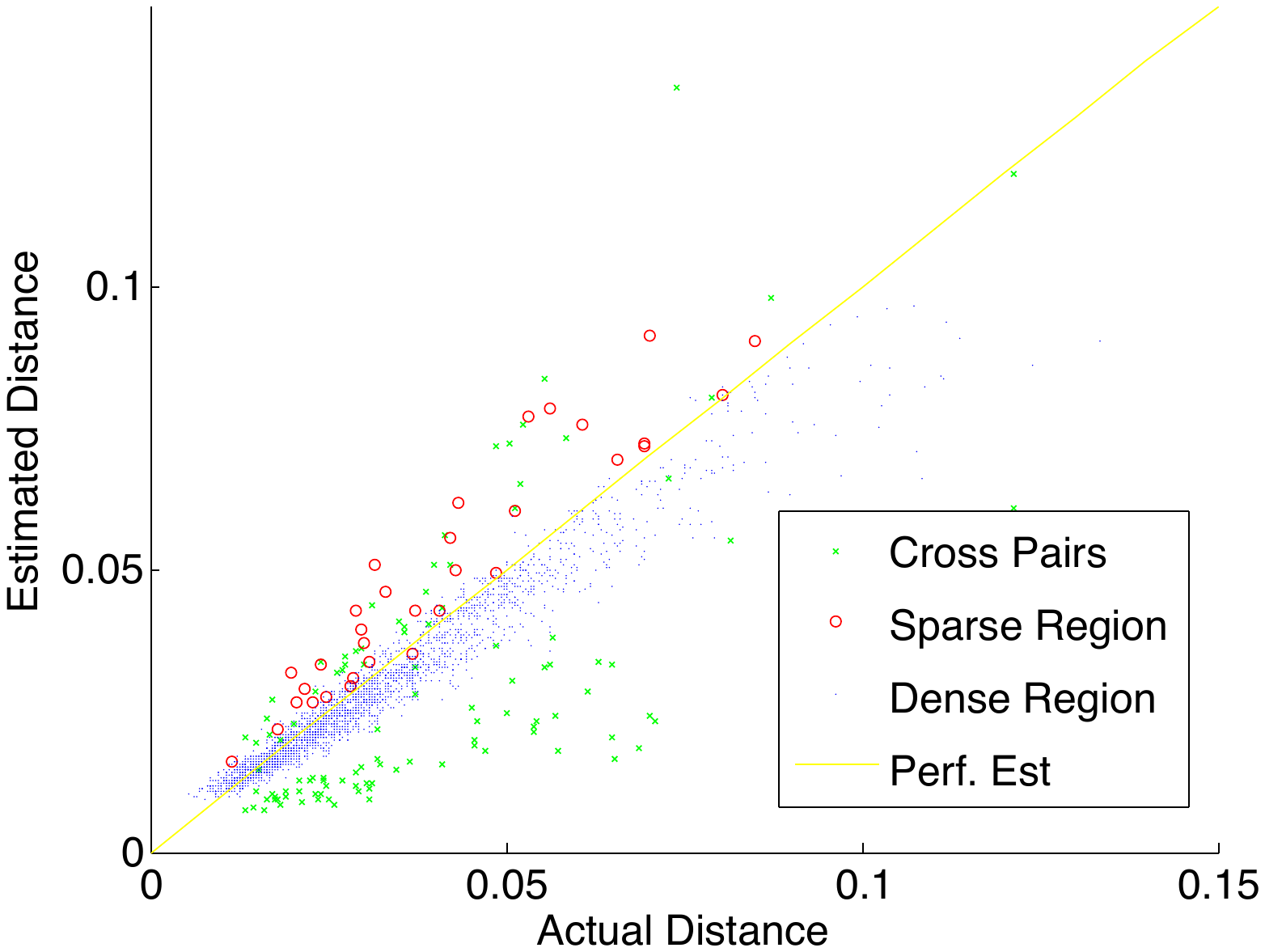}
\end{center}
\caption{SPA model, $n=100,000$, diagonal layout, $p=0.7$, $\rho_d=1.6$, $A_1 = 0.7$, $A_2 = 2.0$, actual vs.\ estimated distances for pairs of nodes; Left: using original estimator; Right: using new estimator, density known}
\label{diag:Est}
\end{figure}

Using the same simulation results, we now apply the estimator from Equation~(\ref{eq:DistCN}), and use our knowledge about $\rho (u)$.  The figure on the right in Figure~\ref{diag:Est} shows the estimated distance using Equation~(\ref{eq:DistCN}) vs.\ actual node distance. The results indicate that our new estimator is significantly more accurate in predicting distances for the pairs of nodes in the dense region. 

Let us mention that the estimation for pairs in the sparse region is still not accurate, while the estimation for cross-border pairs appears to be even worse. This is likely caused by the fact that nodes that are involved in cross-border pairs, and in sparse region pairs, and that have enough common neighbours to qualify to be included, are likely the older nodes, i.e.~they are born near the beginning of the process. For such nodes, in the early stages there is likely some overlap between their sphere of influence and the bordering, dense regions. Thus, the degree likely does not follow the prediction from Theorem~\ref{degconc}, which, in turn, affects the performance of the distance estimator for those pairs.

Better performance for cross-border pairs could possibly be obtained by using a linear combination of the densities in Equation~(\ref{eq:DistCN}). However, we will see in what follows that better performance for all pairs occurs when we use the data itself to estimate the density. Also, we point out that the pairs in the dense region constitute the large majority of all pairs. Moreover, the dense regions are those that are most likely to correspond to communities of interest. Therefore, accurate prediction for pairs in these regions is most important.

\paragraph{Estimating the density:}

In real-world situations, we cannot assume to know the density of the region containing a given node. In fact, the density of the region containing a node is an important part of the `second layer of meaning' which we aim to extract from the graph. Here we will show that our theoretical results give us a tool for estimating the local density around a node, using only its neighbourhood. We also apply our distance estimator once more, this time using the estimated density for our formula.

Using the theoretical results obtained from the previous section, we can estimate the density of the region $\reg(v)$ containing a given node $v$ from the average out-degree of the in-neighbours of  $v$. As per Theorem~\ref{thm:noEdges}, the average out-degree in $\reg_\ell $ is approximately 
\begin{equation}\label{eqn:outdeg}
\frac{p\rho_{\ell}A_2}{1-p\rho_{\ell}A_1}.
\end{equation}
If we have a large enough set of nodes from the same region, then we can use the formula above to estimate the density of the region. Consider a node $v$, and make two assumptions: ($i$) almost all neighbours of $v$ are contained in $\reg (v)$, and ($ii$) the neighbours of $v$ form a representative sample of all nodes of $\reg (v)$. Simulations show that these assumptions are justified and allow us to make an estimate for $\rho(v)$. Assumption ($i$) is additionally justified by the second part of Theorem~\ref{thm:noEdges}, which states that the number of edges crossing the border is negligible compared to the total number of edges.

Set $\overline{\deg}^+(v)$ to be the average out-degree of the in-neighbours of $v$. Specifically,
\[
\overline{\deg}^+(v)=\frac{1}{\deg^-(v)}\sum_{u\in N^-(v)} \deg^+(u).
\]
Given our assumptions, an estimator for the density in $\reg (v)$, denoted by $\hat{\rho}(v)$, can be derived from this average out-degree, using Equation~(\ref{eqn:outdeg}):
\[
\hat{\rho}(v)=\frac{\overline{\deg}^+(v)}{pA_2 + pA_1\overline{\deg}^+(v)},
\]
where $N^-(v)$ is the set of in-neighbours of $v$.

The left side of Figure~\ref{fig:histAvgOD} shows a histogram of the values of $\overline{\deg}^+(v)$ for our simulated graph. Displayed are the results for nodes with $\deg^-(v) \geq 10$. The graph is obtained from the SPA model where points have the previously described diagonal layout, with density $\rho_d =1.6$ in the dense region, and consequently density $\rho_s = 0.8$ in the sparse region. For these parameters, Equation~(\ref{eqn:outdeg}) gives a theoretical value of 5.85 for $\overline{\deg}^+(v)$ if node $v$ lies in the dense region, and a value of 1.45 if $v$ lies in the sparse region. 

We see in Figure~\ref{fig:histAvgOD} (left side) that the values of $\overline{\deg}^+(v)$ in the dense region are quite accurate, with peaks occurring around the calculated value of 5.85. For   the sparse region, the peaks occur around 2.5, giving an estimate for the average out-degree which is higher than expected. Likely, this is caused by nodes in the sparse region that are located close to the border, and thus are likely to have neighbours in the dense region. Such nodes also tend to have high degree, and our condition on the minimum degree favours the `rich' sparse region nodes.

Figure~\ref{fig:histAvgOD} (right side) gives a histogram of the estimated densities of the nodes. For nodes in the dense region, the true value is 1.6, and we see a good estimation of this value for these nodes. For nodes in the sparse nodes, the true value is 0.8, while the peak of the estimated densities occurs around 1.15, and almost all values are greater than 0.8. Again, this is likely caused by nodes whose sphere of influence overlapped with the dense region. 

To obtain better performance for nodes in the sparse region, we propose to base our estimated density for a node $v$ in the sparse region only on the out-degree of neighbours of $v$ of low in-degree; such neighbours are young and so the sphere of influence of $v$ had shrunk, and thus was more likely to be fully inside the sparse region, when the neighbours were born. To obtain density estimates for nodes with small in-degree, we can take the second neighbourhood to compute the average out-degree. Nodes with small in-degrees are young, so even second neighbours are likely to be close. We plan to explore these possibilities in future work.
 
\begin{figure}[h]
\begin{center}
\begin{tabular}{cc}
\includegraphics[scale=0.4]{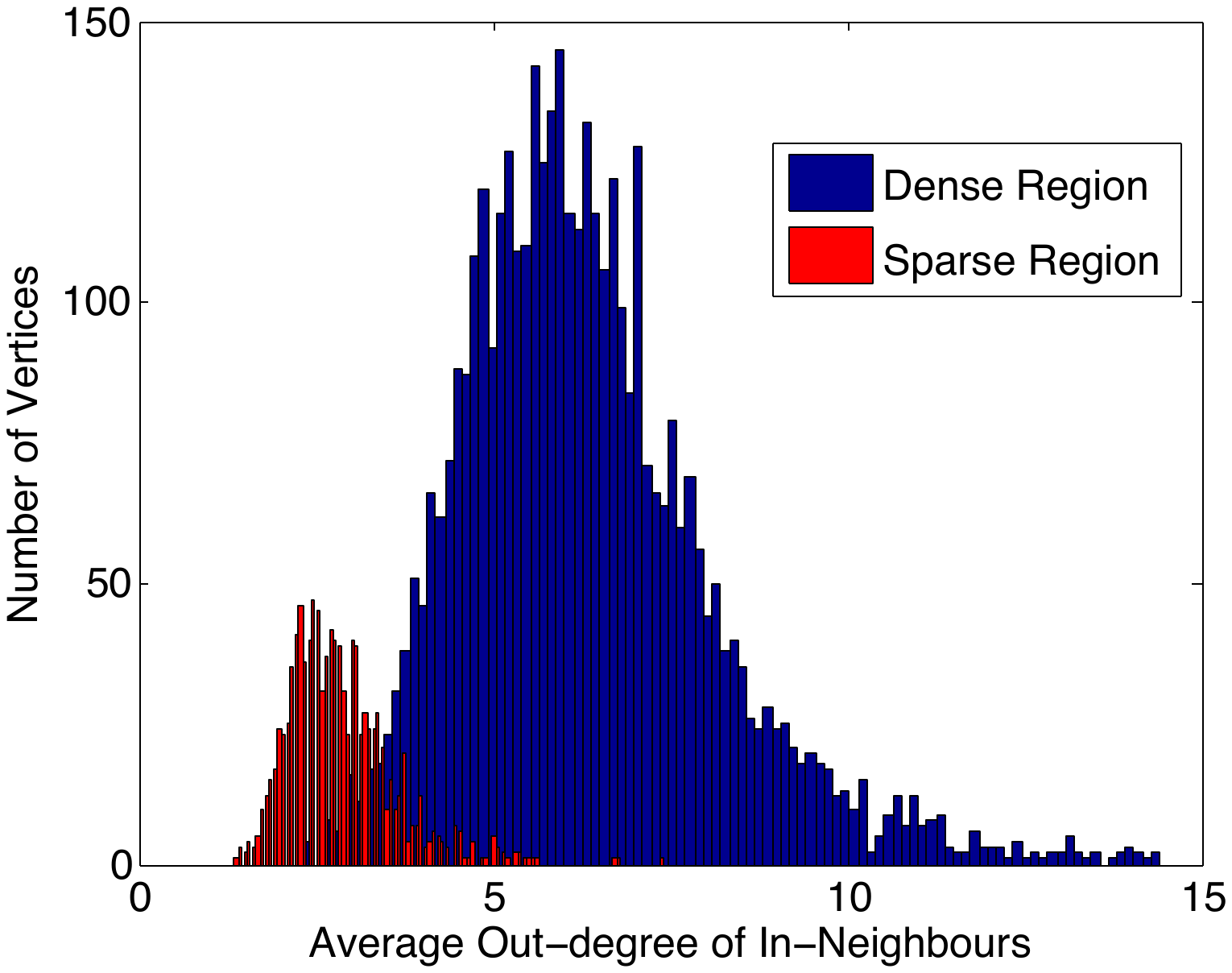}&
\includegraphics[scale=0.4]{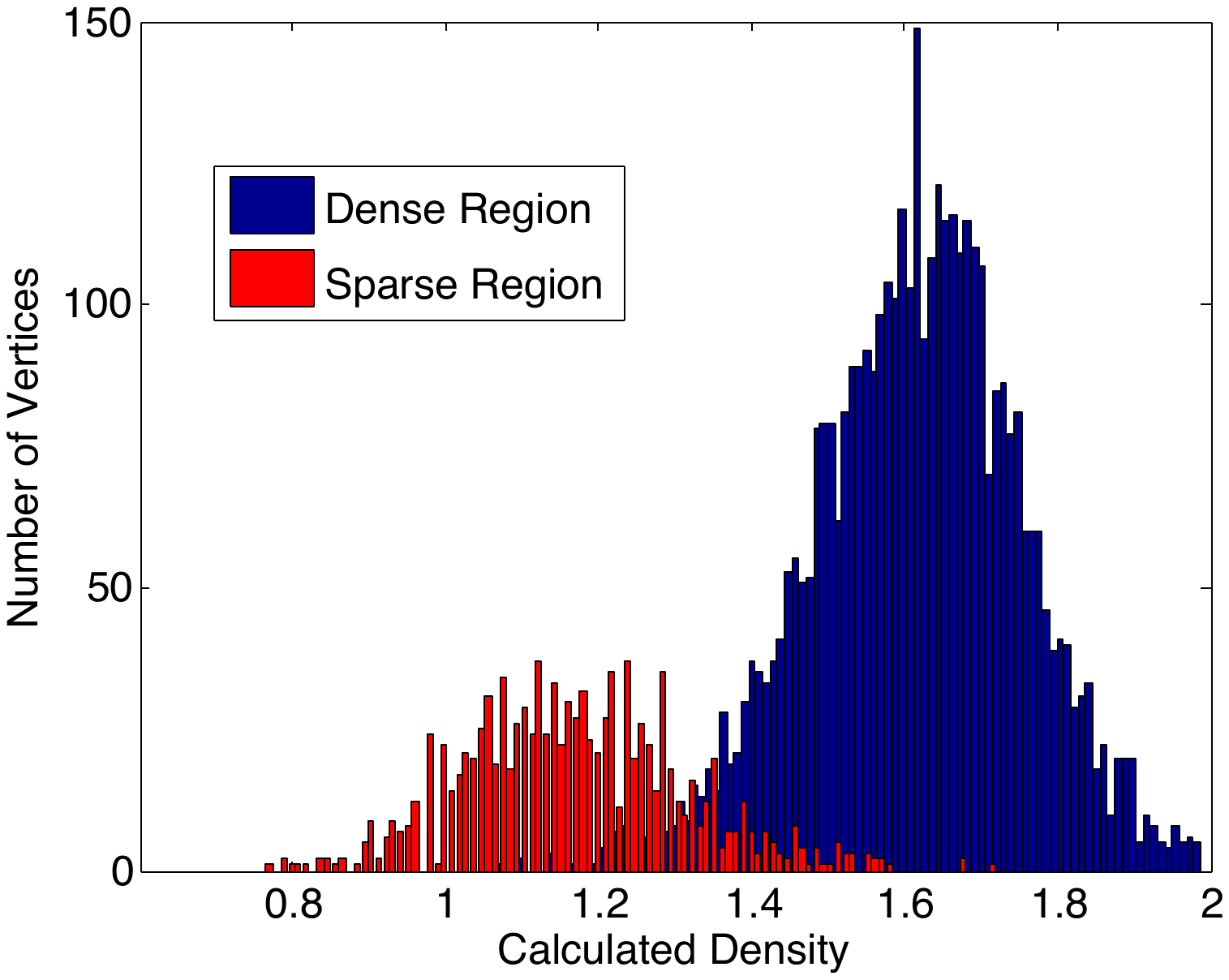}
\end{tabular}
\end{center}
\caption{Diagonal layout, $n = 100,000$, $\rho_d=1.6$, $A_1 = 0.7$, $A_2 = 2.0$, $p=0.6$; Left: average out-degree of the in-neighbours; Right: calculated density from average out-degree. }
\label{fig:histAvgOD}
\end{figure}

Finally, we use $\hat{\rho}$, and known values of all other parameters, to calculate the distance between the nodes based on the number of common neighbours, Equation~(\ref{eq:DistCN}), using the same simulation results as those we used earlier. Here we use the calculated density of the node of higher degree in the distance formula. (Using the lower degree node gives similar results.) The results are seen in Figure~\ref{fig:EstDensAndDist}. 

The figure shows that there is very good agreement between calculated and estimated densities. In fact, we see that the agreement is greatly improved for the cross-border pairs, and also better for the sparse region pairs. This can be understood as follows. The distance estimator is derived indirectly from Theorem~\ref{thm:exp_degree}, which predicts the approximate degree of a node throughout the process, based on its final degree and the density of its region. For nodes in the sparse region which have a sizeable number of neighbours in the sparse region, the degree will be larger than predicted using this method but also, the density estimator will predict a higher density. So the estimated density is a better indicator of the behaviour of the degree than the real density, and thus the distance estimator gives better performance. This indicates that this last variation of the distance estimator is the most robust against local fluctuations in density. Thus we have a good prognosis for the applicability of the estimator on real data, where such fluctuations are to be expected.

\begin{figure}[h]
\begin{center}
\includegraphics[scale=0.45]{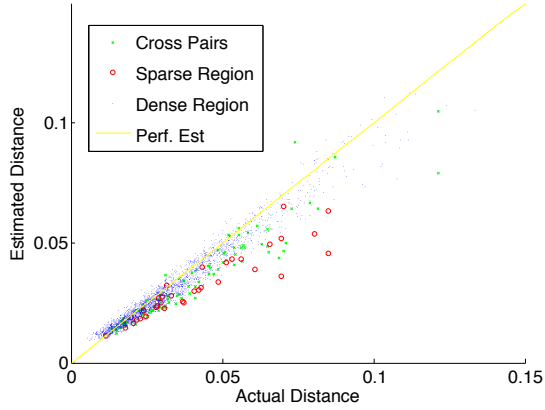}
\end{center}
\caption{Diagonal layout, $n = 100,000$, $\rho_d=1.6$, $A_1 = 0.7$, $A_2 = 2.0$, $p=0.7$:
Distance estimation using estimated density from the node of greater final degree, all other parameters known}
\label{fig:EstDensAndDist}
\end{figure}

\section{Proofs}\label{sec:proofs}

In this section, we give the proofs of the main theorems. Our results all refer to typical behaviour of the random SPA model process, and are asymptotic in $n$, the number of vertices. We will sometimes use the stronger notion of \wep\ in favour of the more commonly used \aas, since it simplifies some of our proofs. We say that an event holds \emph{with extreme probability} (\wep), if it holds with probability at least $1-\exp (-\omega(n) \log n)$ as $n \to \infty$, where $\omega(n)$ is any function tending to infinity together with $n$.  Thus, if we consider a polynomial number of events that each holds \wep, then \wep\ (and hence also \aas) all events hold.

\bigskip

First we state and prove a theorem that bounds the in-degree of any node, regardless of its distance of the boundary. 

\begin{theorem}\label{thm:exp_degree} 
Let $\omega=\omega(t)$ be any function tending to infinity together with $t$. The expected in-degree at time $t$ of a node $v_i$ born at time $i \ge \omega$ satisfies
\begin{eqnarray*}
\E(\deg^-(v_i,t)) &\leq& (1+o(1))\frac{A_2}{A_1} \left(\frac{t}{i}\right)^{p \rho_{\max}A_1} - \frac {A_2}{A_1}. \\
\E(\deg^-(v_i,t)) &\geq& (1+o(1))\frac{A_2}{A_1} \left(\frac{t}{i}\right)^{p (2^{-m} \rho(v)+(1-2^{-m})\rho_{\min})A_1} - \frac {A_2}{A_1}.
\end{eqnarray*}
Moreover, for any node $v_i$ born at time $i \ge 1$ we have
\begin{equation*}
\E(\deg^-(v_i,t)) \leq \frac{e A_2}{A_1} \left(\frac{t}{i}\right)^{p \rho_{\max}A_1} - \frac {A_2}{A_1}.
\end{equation*}
\end{theorem}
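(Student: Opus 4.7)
The plan is to derive a deterministic recurrence for $f_i(t) := \E(\deg^-(v_i,t))$ and then solve it in closed form. At the step from time $t-1$ to $t$, the probability that the new node $v_t$ attaches to $v_i$ equals $p\int_{S(v_i,t-1)}\rho\,d\mu$, so by linearity of expectation
\[
f_i(t)\ =\ f_i(t-1)\ +\ p\,\E\!\left[\int_{S(v_i,t-1)}\rho\,d\mu\right].
\]
Since $|S(v_i,t-1)|=(A_1\deg^-(v_i,t-1)+A_2)/(t-1)$ whenever this is at most one, and $\rho$ is a nonnegative locally constant function, the natural pointwise upper and lower bounds on the integral are each linear in $\deg^-(v_i,t-1)$, closing the recurrence into a linear inequality for $f_i$.

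For the upper bound, using $\rho\le\rho_{\max}$ pointwise gives $f_i(t)\le f_i(t-1)+p\rho_{\max}(A_1f_i(t-1)+A_2)/(t-1)$. Introducing $g_i(t):=f_i(t)+A_2/A_1$ and $\alpha:=p\rho_{\max}A_1$ converts this into the geometric inequality $g_i(t)\le g_i(t-1)(1+\alpha/(t-1))$. Iterating from $g_i(i)=A_2/A_1$ and bounding the resulting product by $\exp\!\left(\alpha\sum_{s=i}^{t-1}1/s\right)$, combined with the standard estimate $\sum_{s=i}^{t-1}1/s\le 1/i+\log(t/i)$, yields a prefactor of $e^{\alpha/i}(t/i)^\alpha$ multiplying $A_2/A_1$. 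For $i\ge\omega$ this is $(1+o(1))(t/i)^\alpha$, giving the first inequality of the theorem. For any $i\ge 1$, the standing assumption $pA_1\rho_{\max}<1$ forces $\alpha<1$, so $e^{\alpha/i}\le e^\alpha\le e$, yielding the universal third inequality.

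For the lower bound, a geometric observation is required. Whatever the position of $v_i$ inside its hypercube $\reg(v_i)$, at least a fraction $2^{-m}$ of the ball $S(v_i,t)$ lies in $\reg(v_i)$; the worst case is $v_i$ sitting at a corner, in which event the remaining volume distributes over the $2^m-1$ adjacent hypercubes in the torus metric. Since every neighbouring hypercube has density at least $\rho_{\min}$,
\[
\int_{S(v_i,t)}\!\rho\,d\mu\ \ge\ \beta\,|S(v_i,t)|,\qquad \beta\ :=\ 2^{-m}\rho(v_i)+(1-2^{-m})\rho_{\min},
\]
provided the radius $r(v_i,t)$ stays below $1/k$. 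Repeating the recurrence-solving argument of the previous paragraph with $\alpha$ replaced by $p\beta A_1$ then gives the second inequality.

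The main obstacle is validating the hypothesis $r(v_i,s)\le 1/k$ across the whole range $i\le s\le t$ needed for the lower bound. This follows from the upper bound just proved: it forces $|S(v_i,s)|=O(s^{\alpha-1}/i^\alpha)$, a quantity bounded uniformly in $s\ge i$ by an arbitrarily small constant whenever $i\ge\omega$, so that $r(v_i,s)=o(1)<1/k$ for large $n$. A secondary concern is the discrete-to-continuous transition: the $(1+o(1))$ prefactor emerges precisely because the tail $O(1/i)$ in the harmonic sum is negligible once $i\ge\omega$. No concentration argument is needed here, as the theorem is a statement purely about expectations.
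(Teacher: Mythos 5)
Your argument is, in substance, the paper's own proof: the upper bounds come from iterating the linear recurrence for $\E(\deg^-(v_i,t))+A_2/A_1$ obtained from $\rho\le\rho_{\max}$ (the paper packages this as a coupling with a dominating process $X(v_i,t)$, but the computation --- product, exponential bound, harmonic sum, the $e^{p\rho_{\max}A_1/i}$ prefactor --- is identical), and the lower bound rests on the same observation that an orthant of $S(v_i,t)$ of volume $2^{-m}|S(v_i,t)|$ lies in $\reg(v_i)$ while the rest sees density at least $\rho_{\min}$.

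The one step that does not hold up as written is your validation of the hypothesis $r(v_i,s)\le 1/k$ for the lower bound. You claim it ``follows from the upper bound just proved,'' which ``forces $|S(v_i,s)|=O(s^{\alpha-1}i^{-\alpha})$.'' But that upper bound controls only $\E(\deg^-(v_i,s))$, hence only $\E|S(v_i,s)|$; it gives no pointwise or almost-sure bound on the random volume. The recurrence inequality you need, $\E\bigl[\int_{S(v_i,s)}\rho\,d\mu\bigr]\ge \beta\,\E|S(v_i,s)|$, requires the geometric inequality to hold for (essentially) every realization, and it can fail exactly on the event that the radius is large, i.e.\ that $\deg^-(v_i,s)=\Omega(s)$. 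A correct repair is to split on that event and show its contribution is absorbed into the $(1+o(1))$ factor; Markov's inequality applied to the expectation bound is not sharp enough for this (summed over $s$ it produces a term of the same order as the main term), so one needs a genuine tail bound, e.g.\ exponential decay of $\Prob(\deg^-(v_i,s)\ge c's)$ coming from $p\rho_{\max}A_1<1$. For what it is worth, the published proof simply asserts $|S(v,t)\cap\reg(v)|\ge 2^{-m}|S(v,t)|$ for all nodes and all times with no radius restriction --- which is false for spheres of radius larger than roughly $1/(2k)$ when $k>2$ --- so you were right to flag the condition; only the justification you supply for it is invalid.
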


\begin{proof}
In order to simplify calculations, we make the following substitution:
\begin{equation*}
Y(v_i,t) = \deg^-(v_i,t) + \frac{A_2}{A_1}=\frac{t}{A_1}|S(v_i,t)|.
\end{equation*}
It follows immediately from the definition of the process that $Y(v_i,i)=A_2/A_1$ and for $t \ge i$
$$
Y(v_i,t+1) =
\begin{cases}
      Y(v_i,t)+1, & \text{with probability at most }p \rho_{\max}\frac{A_1}{t} Y(v_i,t),\\
      Y(v_i,t), & \text{otherwise}.
   \end{cases}
$$

We couple $Y(v_i,t)$ with another random variable $X(v_i,t)$ so that $Y(v_i,t) \le X(v_i,t)$ for $t \ge i$. Random variable $X(v_i,t)$ is defined as follows: $X(v_i,i)=A_2/A_1$ and for $t \ge i$
$$
X(v_i,t+1) =
\begin{cases}
      X(v_i,t)+1, & \text{with probability }p \rho_{\max}\frac{A_1}{t} X(v_i,t)\\
      X(v_i,t), & \text{otherwise}.
   \end{cases}
$$
Finding the conditional expectation,
\begin{equation*}
\mathbb{E}(X(v_i,t+1)~~|~~X(v_i,t)) =X(v_i,t)\left(1+\frac{p\rho_{\max}A_1}{t}\right).
\end{equation*}
Taking expectations, we get
$$
\mathbb{E}(X(v_i,t+1)) = \mathbb{E}(X(v_i,t)) \left( 1+\frac{p\rho_{\max}A_1}{t} \right),
$$
and, since $X(v_i,i) = {A_2}/{A_1}$,
\begin{align*}
\mathbb{E}(X(v_i,t)) &= \frac{A_2}{A_1}\prod_{j=i}^{t-1} \left( 1+\frac{p\rho_{\max}A_1}{j} \right)&\\
&\le  \frac{A_2}{A_1} \exp\left(\sum_{j=i}^{t-1} \frac{p\rho_{\max}A_1}{j}\right)&\\
&\le \frac{A_2}{A_1} \exp\left(p\rho_{\max}A_1 \left(\log \left( \frac {t}{i} \right)+1/i\right) \right)  \\
&\le \frac{eA_2}{A_1} \left(\frac{t}{i}\right)^{p\rho_{\max}A_1}.
\end{align*}
If $i \ge \omega$, we have
$$
\mathbb{E}(X(v_i,t)) = (1+o(1)) \frac{A_2}{A_1} \exp\left(\sum_{j=i}^{t-1}\frac{p\rho_{\max}A_1}{j}\right) =(1+o(1))\frac{A_2}{A_1} \left(\frac{t}{i}\right)^{p\rho_{\max}A_1}.
$$
This shows the upper bound.

For the lower bound, we first observe that for all nodes $v$, $|S(v,t)\cap \reg (v)|\geq (1/2)^m |S(v,t)|$. Thus the node $v_t$ links to $v_i$ with probability at least 
$$
p(2^{-m}\rho (v)+(1-2^{-m})\rho_{\min})|S(v_i,t)|.
$$ 
Using this, we can use the exact same approach to bound the expectation of  $Y(v_i,t)$ from below. This gives the lower bound.
\end{proof}

\subsubsection*{Proof of Theorem~\ref{degconc}}

Here we show that, once a vertex has reached an in-degree of $\omega \log n$ and its area of influence is well contained within the region, its degree can be closely predicted with high probability. We will be using the following version of the Chernoff bound, as seen in e.g.~\cite[p.\ 27, Corollary~2.3]{JLR}. 

\begin{lemma}\label{lem:Chernoff} 
Let $X$ be a random variable that can be expressed as a sum of independent random indicator variables, $X=\sum_{i=1}^{n} X_{i}$, where $X_{i}\in\mathrm{Be}(p_{i})$ with (possibly) different $p_{i}=\mathbb{P} (X_{i} = 1)= \mathbb{E}X_{i}$. If $\varepsilon\le3/2$, then
\begin{align}\label{eq:Ch}
\mathbb{P} (|X - \mathbb{E }X| \ge\varepsilon\mathbb{E }X)  &  \le2 \exp\left(  - \frac{\varepsilon^{2} \mathbb{E }X}{3} \right)  .
\end{align}
\end{lemma}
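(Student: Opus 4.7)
The plan is to establish this by the classical Chernoff exponential-moment method, treating the upper tail $\pp(X\ge (1+\eps)\E X)$ and the lower tail $\pp(X\le (1-\eps)\E X)$ separately and combining by a union bound. First I would fix $t>0$ and apply Markov's inequality to $e^{tX}$: $\pp(X\ge (1+\eps)\mu)\le e^{-t(1+\eps)\mu}\E e^{tX}$, where $\mu=\E X=\sum_i p_i$. Independence factorises the moment generating function as $\E e^{tX}=\prod_i(1-p_i+p_i e^t)=\prod_i\bigl(1+p_i(e^t-1)\bigr)$, and then the elementary inequality $1+x\le e^x$ gives $\E e^{tX}\le\exp\bigl(\mu(e^t-1)\bigr)$. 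This bound is uniform in the individual $p_i$, which is why the result depends on them only through $\mu$.

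Next I would optimise in $t$. For the upper tail, the choice $t=\log(1+\eps)$ yields
\[
\pp(X\ge (1+\eps)\mu)\le\left(\frac{e^{\eps}}{(1+\eps)^{1+\eps}}\right)^{\!\mu}=\exp\bigl(-\mu\,\varphi(\eps)\bigr),\qquad \varphi(\eps):=(1+\eps)\log(1+\eps)-\eps.
\]
The corresponding calculation with $t=-\log(1-\eps)>0$ (for $0<\eps<1$) gives the symmetric lower-tail bound $\pp(X\le (1-\eps)\mu)\le\exp\bigl(-\mu\,\psi(\eps)\bigr)$ with $\psi(\eps):=(1-\eps)\log(1-\eps)+\eps$. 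It remains to compare $\varphi$ and $\psi$ with the target quadratic $\eps^{2}/3$.

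The analytical step, which is the only genuine obstacle, is to verify $\varphi(\eps)\ge \eps^{2}/3$ for $0\le\eps\le 3/2$ and $\psi(\eps)\ge \eps^{2}/3$ for $0\le\eps<1$. Both functions vanish to second order at $0$, so I would define $f(\eps)=\varphi(\eps)-\eps^{2}/3$, note $f(0)=f'(0)=0$, and compute $f''(\eps)=\tfrac{1}{1+\eps}-\tfrac{2}{3}$; this is nonnegative precisely for $\eps\le 1/2$, and for $1/2\le\eps\le 3/2$ one checks nonnegativity by integrating and using that the deficit in $f''$ is outweighed by the accumulated positive curvature on $[0,1/2]$ (a one-line monotonicity argument on $f'$ suffices, since $f'(3/2)$ is still positive). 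The lower-tail inequality $\psi(\eps)\ge \eps^{2}/2\ge\eps^{2}/3$ follows similarly from $\psi''(\eps)=1/(1-\eps)\ge 1$. The cutoff $\eps\le 3/2$ is thus exactly what is required to bring the upper tail into line with the weaker constant $1/3$ shared by both tails.

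Finally, adding the two tail probabilities by a union bound gives
\[
\pp(|X-\E X|\ge\eps\E X)\le \exp\bigl(-\eps^{2}\mu/3\bigr)+\exp\bigl(-\eps^{2}\mu/3\bigr)=2\exp\bigl(-\eps^{2}\E X/3\bigr),
\]
which is the claim. (The case $\eps\ge 1$ for the lower tail is trivial because $\pp(X\le 0)$ contributes nothing new; only the upper tail needs the extension up to $3/2$.) The whole argument is self-contained and uses only independence, Markov's inequality, and the two elementary analytic inequalities $1+x\le e^{x}$ and $\varphi(\eps),\psi(\eps)\ge \eps^{2}/3$.
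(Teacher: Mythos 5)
Your overall strategy is sound, and it is worth noting that the paper does not prove this lemma at all: it simply cites it as Corollary~2.3 of Janson, \L{}uczak and Ruci\'nski, so a self-contained derivation is extra work rather than a retracing of the paper. The exponential-moment computation, the bound $\E e^{tX}\le\exp\bigl(\mu(e^t-1)\bigr)$, the choices $t=\log(1+\eps)$ and $t=-\log(1-\eps)$, the reduction to $\varphi(\eps)=(1+\eps)\log(1+\eps)-\eps\ge\eps^{2}/3$ on $[0,3/2]$ and $\psi(\eps)=(1-\eps)\log(1-\eps)+\eps\ge\eps^{2}/2$ on $[0,1)$, the final union bound, and the remark that the lower tail is harmless for $\eps\ge1$ are all correct and standard.

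There is, however, a genuine flaw in the one step you yourself identify as the obstacle. With $f(\eps)=\varphi(\eps)-\eps^{2}/3$ one has $f'(\eps)=\log(1+\eps)-2\eps/3$, and your argument rests on the assertion that $f'(3/2)$ is still positive; in fact $f'(3/2)=\log(5/2)-1\approx-0.084<0$ (the zero of $f'$ is near $\eps\approx1.19$), so the claimed one-line monotonicity argument does not cover the range $1/2\le\eps\le3/2$. The inequality is nonetheless true, and the repair is immediate: since $f''(\eps)=\tfrac{1}{1+\eps}-\tfrac{2}{3}$ changes sign exactly once, $f'$ first increases and then decreases, and with $f'(0)=0$ this makes $f$ increasing up to the unique positive zero of $f'$ and decreasing thereafter; hence the minimum of $f$ on $[0,3/2]$ is attained at an endpoint, where $f(0)=0$ and $f(3/2)=\tfrac52\log\tfrac52-\tfrac32-\tfrac34\approx0.04>0$. (Alternatively, follow the route used in the cited source: $\varphi(\eps)\ge\frac{\eps^{2}}{2(1+\eps/3)}$, which for $\eps\le3/2$ yields exactly the constant $1/3$ and explains where the cutoff $3/2$ comes from.) With that correction your proof is complete.
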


\bigskip

Let us start with the following key lemma. 

\begin{lemma}\label{thm:conc}
Let $\omega=\omega(n)$ be any function tending to infinity together with $n$, and let $\eps >0$.  For a given node $v$, suppose that $\deg^-(v,T)=d \ge \omega \log n$ and that 
$$
\delta(v)\geq (1+\eps)\left(\frac{A_1d+A_2}{c_mT}\right)^{1/m}=(1+\eps)r(v,T).
$$ 
Then, with probability $1-O(n^{-3})$, for every value of $t$, $T \le t  \le 2T$, 
$$
\left| \deg^-(v,t) - d \cdot \left( \frac tT \right)^{p \rho(v) A_1} \right| \le \frac {3}{p \rho(v) A_1} \cdot \frac {t}{T} \sqrt{d \log n}.
$$
\end{lemma}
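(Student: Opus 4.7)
Set $a=p\rho(v)A_1\in(0,1)$ and $Y_t=\deg^-(v,t)+A_2/A_1$, so that $Y_T=d+A_2/A_1$. The starting point is the observation that, \emph{as long as} the sphere of influence $S(v,s)$ stays inside $\reg(v)$, every newly arriving node links to $v$ with probability exactly $p\rho(v)|S(v,s)|=aY_s/s$. Consequently $\E[Y_{s+1}-Y_s\mid\mathcal{F}_s]=aY_s/s$, so if $\Pi_t:=\prod_{s=T}^{t-1}(1+a/s)$ then $M_t:=Y_t/\Pi_t$ is a martingale with $M_T=Y_T$. Since $\Pi_t=(t/T)^a\bigl(1+O(1/T)\bigr)$, controlling $|M_t-Y_T|$ translates directly into controlling $|\deg^-(v,t)-d(t/T)^a|$ up to an $O(1)$ deterministic error.

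The first subtlety is that the martingale identity only holds while $S(v,s)\subseteq\reg(v)$. I would therefore introduce the stopping time $\tau=\inf\{t\geq T:r(v,t)>\delta(v)\}$ and work with $\tilde M_t:=M_{t\wedge\tau}$, which is a martingale on all of $[T,2T]$. A short bootstrap then shows that $\tau$ never triggers: on the event $\{|\tilde M_s-Y_T|\leq o(Y_T)\text{ for all }T\le s\leq t\}$ one has $Y_t\leq(1+o(1))Y_T(t/T)^a$, whence
\[
r(v,t)^m\leq \frac{A_1 Y_t}{c_m t}\leq (1+o(1))\frac{A_1 Y_T}{c_m T}\cdot\left(\frac{T}{t}\right)^{1-a}\leq (1+o(1))\,r(v,T)^m.
\]
Because $\delta(v)\geq(1+\eps)r(v,T)$, the $(1+\eps)^m$ slack is exactly enough to guarantee $r(v,t)<\delta(v)$, so $\tau>2T$ on this event, and the stopped martingale agrees with $M_t$ throughout $[T,2T]$.

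It remains to bound $|\tilde M_t-Y_T|$ uniformly in $t\in[T,2T]$. The one-step increments satisfy $|\tilde M_{s+1}-\tilde M_s|\leq 1/\Pi_{s+1}\leq 1$ and
\[
\mathrm{Var}\bigl(\tilde M_{s+1}-\tilde M_s\mid\mathcal{F}_s\bigr)\leq \frac{aY_s}{s\,\Pi_{s+1}^2}.
\]
Summing for $s\in[T,2T)$, substituting $Y_s\approx Y_T\Pi_s$ and $\Pi_s\approx(s/T)^a$, and using $\int_T^{2T} T^a s^{-(1+a)}\,ds\leq 1/a$, one obtains total predictable quadratic variation $V=O(Y_T/a)=O(d/a)$. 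Freedman's maximal inequality then gives
\[
\pp\!\left(\max_{T\leq t\leq 2T}|\tilde M_t-Y_T|\geq\lambda\right)\leq 2\exp\!\left(-\frac{\lambda^2}{2V+2\lambda/3}\right),
\]
and a choice of $\lambda$ of order $\sqrt{(d/a)\log n}$ drives the right-hand side below $n^{-3}$. Pulling back via $Y_t=\tilde M_t\,\Pi_t$, using $\Pi_t\leq(t/T)^a\leq t/T$ (valid since $a<1$), and absorbing the $O(1)$ deterministic correction coming from $Y_T\Pi_t-d(t/T)^a$ into the $\sqrt{d\log n}$ error, I would recover the bound $(3/a)(t/T)\sqrt{d\log n}$ after tuning the constant in $\lambda$.

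The main obstacle is the coupling between the two statements: the martingale identity requires $S(v,\cdot)\subseteq\reg(v)$, but that containment is itself delivered only by the very degree concentration one is trying to prove. The $(1+\eps)$ margin in the hypothesis $\delta(v)\geq(1+\eps)r(v,T)$ is tailored precisely so that a single stopping-time bootstrap breaks this circularity. The other technical point is the variance computation, where the integrability of $s^{-(1+a)}$ contributes the $1/a$ prefactor that ends up in the $(3/(p\rho(v)A_1))$ constant of the final bound.
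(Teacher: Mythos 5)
Your proof is correct in substance, but it takes a genuinely different technical route from the paper's. The paper defines its stopping time $T_0$ directly on the degree (the first time $\deg^-(v,t)$ exceeds the claimed upper envelope), observes that before $T_0$ the deterministic degree bound together with the $(1+\eps)$ slack in $\delta(v)$ forces $S(v,t)\subseteq\reg(v)$, and then stochastically dominates the accumulated in-degree by a sum of \emph{independent} Bernoulli variables whose success probabilities are read off from the envelope; the ordinary Chernoff bound (Lemma~\ref{lem:Chernoff}) for independent sums finishes the argument, with the lower bound handled by a symmetric argument. You instead stop on the geometric event $r(v,t)>\delta(v)$, normalize $Y_t=\deg^-(v,t)+A_2/A_1$ by $\Pi_t$ to obtain an exact martingale, and invoke Freedman's inequality; this buys you two-sided concentration in one shot and avoids the domination construction, at the price of importing a martingale inequality the paper does not state. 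Two places in your write-up need a little more care, though both are routine: (i) in the bootstrap you must argue that at the putative stopping time $\tau$ itself the radius still satisfies $r(v,\tau)\le\delta(v)$ (using $Y_\tau\le Y_{\tau-1}+1$), so that $\tau\le 2T$ genuinely yields a contradiction; and (ii) the substitution $Y_s\approx Y_T\Pi_s$ in the predictable quadratic variation is only available on the good event, so you must use the version of Freedman's inequality that bounds $\pp\bigl(\exists t:|\tilde M_t-Y_T|\ge\lambda\ \mbox{and}\ V_t\le v\bigr)$, or fold the variance threshold into the stopping time. With those points made explicit, the constants track as you claim and the stated bound $\frac{3}{p\rho(v)A_1}\cdot\frac{t}{T}\sqrt{d\log n}$ is recovered.
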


\begin{proof}
Let $\rho=\rho(v)$. Our goal is to estimate $\deg^-(v,t) - d \cdot \left( t/T \right)^{p \rho A_1}$. We will show that the upper bound holds; the lower bound can be obtained by using an analogous, symmetric, argument. Note that the assumption on $\delta(v)$ implies that $S(v,T) \subseteq \reg(v)$.

We use the following stopping time 
$$
T_0=\min \left\{ t \ge T : \deg^-(v,t) >  d \cdot \left( \frac tT \right)^{p \rho A_1} + \frac {3}{p \rho A_1} \cdot \frac {t}{T} \sqrt{d \log n} ~~\vee~~ t=2T+1 \right\}.
$$ 
Note that if $T_0=2T+1$, then the in-degree of $v$ remained bounded as required during the entire time interval $T \le t \le 2T$. Hence, in order to prove the bound, we need to show that with probability $1-O(n^{-3})$ we have $T_0=2T+1$.

Suppose that $T_0 \le 2T$. Note that for $t \ge T$ up to and including time-step $T_0-1$, the random variable $\deg^-(v,t)$ is (deterministically) bounded from above. Moreover, it is straightforward to see that this upper bound, together with the assumption on $\delta(v)$ (note the additional multiplicative $(1+\eps)$ term), implies that $S(v,t) \subseteq \reg(v)$ for all $T \le t \le T_0-1$. Hence, the number of new neighbours accumulated during this phase of the process, $\deg^-(v,T_0)-d$, can be (stochastically) bounded from above by the sum $X = \sum_{t=T}^{T_0-1} X_t$ of independent indicator random variables $X_t$, where 
$$
\Prob(X_t = 1) = p \rho \ \frac {A_1 \left( d \left( \frac tT \right)^{p \rho A_1} + \frac {3}{p \rho A_1} \cdot \frac {t}{T} \sqrt{d \log n} \right) + A_2}{t}.
$$
Clearly, since  $p \rho A_1 < 1$,
\begin{eqnarray*}
\E X &=& \sum_{t=T}^{T_0-1} \E X_t \\
&=& p \rho A_1 d T^{-p \rho A_1} \left( \sum_{t=T}^{T_0-1} t^{p \rho A_1-1} \right) + \frac {T_0-T}{T}  3 \sqrt{d \log n} + O(1) \\
&=& d \left( \frac {T_0}{T} \right)^{p \rho A_1} - d\left( \frac {T}{T} \right)^{p \rho A_1} + \frac {T_0-T}{T}  3 \sqrt{d \log n} + O(1)\\
&=& d \left( \frac {T_0}{T} \right)^{p \rho A_1} - d + \frac {T_0-T}{T}  3 \sqrt{d \log n} + O(1).
\end{eqnarray*}
Since $T_0 \leq 2T$, the in-degree of $v$ at time $T_0$ failed the desired condition, which implies that
\begin{eqnarray*}
X &\ge& \deg^-(v,T_0) - d \\
&\ge& \left( d \cdot \left( \frac{T_0}{T} \right)^{p \rho A_1} + \frac {3}{p \rho A_1} \cdot \frac {T_0}{T} \sqrt{d \log n} \right) - d \\
&=& \E X + \frac {3}{p \rho A_1} \cdot \frac {T_0}{T} \sqrt{d \log n} - \frac {T_0-T}{T}  3 \sqrt{d \log n} + O(1) \\
&\ge& \E X + 3 \sqrt{d \log n}, 
\end{eqnarray*}
using again that it is assumed that $p \rho A_1 < 1$. It follows from the Chernoff bound~(\ref{eq:Ch}) that
$$
\Prob (|X - \E X| \ge 3 \sqrt{d \log n}) \le 2 \exp \left(  - \ \eps \sqrt{d \log n} \right),
$$
where $\eps = 3 \sqrt{d \log n} / \E X$. The maximum value of $\E X$ corresponds to $T_0=2T$ and so 
\begin{eqnarray*} 
\E X &\leq& d \left( \frac {2T}{T} \right)^{p \rho A_1} - d + \frac {2T-T}{T}  2 \sqrt{d \log n} + O(1) \\
&=& d (2^{p \rho A_1} - 1)(1+o(1)) \\
&\le& d. 
\end{eqnarray*}
So $\eps \ge 3 \sqrt{d^{-1} \log n}$. Therefore, the probability that $T_0 \le 2T$ is at most $2 \exp (- 3 \log n)$ and the proof is finished.
\end{proof}

\bigskip

Now, with Lemma~\ref{thm:conc} in hand we can get Theorem~\ref{degconc}. 

\begin{proof}[Proof of Theorem~\ref{degconc}]
Let $\omega = \omega(n)$ be a function going to infinity with $n$, and let $\eps >0$. Let $v$ be a vertex with final degree $k\geq \omega\log n$, let $\rho =\rho (v)$, and assume that $\delta = \delta (v) \ge (1+\eps)r(v,n)$. Let $\hat{T}_v$ be the first time  that the in-degree of $v$ exceeds $(\omega/2) \log n$, and $\hat{t}_v$ be the first time $t$ that the radius of influence $r(v,t)\leq \delta (1+\eps/2)^{-(1-p\rho A_1)/m}$. Moreover, let $T = \max\{\hat{T_v}, \hat{t}_v\}$ be the first time that the two events hold. Finally, let $d = \deg^-(v,T)$. We obtain from Lemma~\ref{thm:conc} that, with probability $1-O(n^{-3})$, 
$$
d \left( \frac tT \right)^{p\rho A_1} \left(1 - \frac {3}{p \rho A_1} \sqrt{d^{-1} \log n} \right) \le \deg^-(v,t) \le d \left( \frac tT \right)^{p\rho A_1} \left(1 + \frac {3}{p\rho A_1} \sqrt{d^{-1} \log n} \right)
$$
for $T \le t \le 2T$. It follows that the degree tends to grow but the sphere of influence tends to shrink between $T$ and $2T$, and thus that the conditions of Lemma \ref{thm:conc} again hold at time $2T$. We can now keep applying the same lemma for times $2T$, $4T$, $8T$, $16T, \dots$, using the final value as the initial one for the next period, to get the statement for all values of $t$ from $T$ up to and including time $n$. Precisely, for $1\leq i<i_{\max}=\lfloor \log n\rfloor+1$, let $d_i=\deg^-(v,2_i T)$. Then by Lemma~\ref{thm:conc}, we have for $i>1$ that $d_i\leq d_{i-1}2^{p\rho A_1} (1+\eps_i)$, where $\eps_i=\frac {3}{p\rho A_1} \sqrt{d_{i-1}^{-1} \log n} $. Since we apply the lemma $O(\log n)$ times (for a given vertex $v$), the following statement holds with probability $1-o(n^{-2})$ from time $T$ on: for any $2^{i-1}T\leq t<2^{i} T$, we have that 
\[
\deg^-(v,t)\leq d \left( \frac tT \right)^{p\rho A_1}\prod_{j=0}^i (1+\eps_i).
\]
It remains to make sure that the accumulated multiplicative error term is still only $(1+o(1))$.
% For that, we use strong induction. Suppose that after applying the lemma recursively $i-1$ times the degree is shown to be $d 2^{p\rho A_1j}(1+o(1))$ at times $T 2^j$ ($j < i$). Using this rough estimate, and assuming the lemma is applied for a total of $i=O(\log n )$ times, we get that the error term at time $i$ is, in fact, bounded from above by
For that, let us note that 
\begin{eqnarray*}
\prod_{j=0}^i (1+\eps_i) &=& \prod_{j=1}^{i} \left( 1 + \frac{3}{p\rho A_1} \sqrt{d^{-1} 2^{-p\rho A_1j} \log n} \right) \\
&=& (1+o(1)) \exp \left( \frac{3}{p\rho A_1} \sqrt{d^{-1} \log n} \sum_{j=1}^{i} 2^{-p\rho A_1j/2} \right) \\
&=& (1+o(1)) \exp \left( O(\sqrt{d^{-1} \log n}) \right) \\
&=& 1+o(1),
\end{eqnarray*}
since $d$ grows faster than $\log n$. A symmetric argument can be used to show a lower bound for the error term and so the result holds. 

It follows that we have the desired behaviour from time $T$. Precisely, for times $T\leq t\leq n$, we have that 
\begin{equation*}
\deg^-(v,t) = d \left( \frac tT \right)^{p\rho A_1}(1+o(1)),
\end{equation*}
where $d=\deg^-(v,T) \ge \deg^-(v,\hat{T}_v) \ge (\omega/2) \log n$.  As $T = \max\{\hat{T}_v, \hat{t}_v\}$,  we need to consider two cases. Suppose first that $T=\hat{T}_v$. Setting $t=n$ and $\deg^-(v,n)=k$, we obtain that 
\[
T=(1+o(1))\left(\frac{d}{k}\right)^{1/p\rho A_1}n=(1+o(1))\left(\frac{\omega\log n}{2k}\right)^{1/p\rho A_1}n=(1+o(1)) \left(\frac12\right)^{1/p\rho A_1} T_v. 
\]
Therefore, for large enough $n$, we have that $T<T_v\leq \max\{ t_v, T_v\}$.  Suppose then that  $T = \hat{t}_v$. By definition,
\[
r(v,T) = (1+o(1)) \delta (1+\eps/2)^{-(1-p\rho A_1)/m}
\]
and, since $d>\omega \log n$,
\begin{eqnarray*}
r(v,T) &=& (1+o(1)) \left( \frac {A_1 d}{c_m T} \right)^{1/m} \\
&=& (1+o(1)) \left( \frac {A_1 k}{c_m T^{1-p\rho A_1} n^{p\rho A_1}} \right)^{1/m}, 
\end{eqnarray*}
and so
$$
T = (1+o(1)) (1+\eps/2) \left( \frac {A_1 k}{\delta^m c_m n^{p\rho A_1}} \right)^{1/(1-p\rho A_1)} = (1+o(1)) \ \frac {1+\eps/2}{1+\eps} \ t_v.
$$
Again, for large enough $n$, we have that $T<t_v\leq \max\{ t_v, T_v\}$. In either case, $T < \max\{ t_v, T_v\}$. As a result, we obtain that, for $\max\{ t_v, T_v\} \leq t\leq n$,
\begin{equation*}
\deg^-(v,t) = k \left( \frac{t}{n} \right)^{p\rho A_1}(1+o(1)).
\end{equation*}
Finally, since the statement holds for any vertex $v$ with probability $1-o(n^{-2})$, with probability $1-o(n^{-1})$  the statement holds for all vertices. The proof of the theorem is finished.
\end{proof}

Let us note that Theorem~\ref{degconc} immediately implies the following two corollaries. 
\begin{cor}\label{cor:tT}
Let $\omega =\omega (n)$ be any function tending to infinity together with $n$, and let $\eps >0$. The following holds with probability $1-o(n^{-1})$. For every node $v$, and for every time $T$ so that $\deg^-(v,T)\geq \omega\log n$ and $(1+\eps)r(v,T)\leq \delta(v)$, for all times $t$, $T\leq t\leq n$,
\[
\deg^-(v,t) = \deg^-(v,T) \left( \frac tT \right)^{p\rho (v) A_1}(1+o(1)).
\]
\end{cor}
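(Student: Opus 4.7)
The plan is to observe that Corollary~\ref{cor:tT} is exactly the statement that was actually established inside the proof of Theorem~\ref{degconc}; the only change is that the starting time $T$ is now an arbitrary time satisfying the two hypotheses, rather than the specific choice $T=\max\{\hat T_v,\hat t_v\}$ used there. So I would essentially replay the iteration of Lemma~\ref{thm:conc} verbatim, starting from the given $T$ and $d:=\deg^-(v,T)\geq \omega\log n$, and then take a union bound.

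Concretely, I would first check that Lemma~\ref{thm:conc} applies at time $T$: the hypotheses $d\geq \omega\log n$ and $(1+\eps)r(v,T)\leq \delta(v)$ are exactly those assumed in the corollary, so with probability $1-O(n^{-3})$ the desired concentration holds on $[T,2T]$. Then I would verify that the hypotheses propagate to $2T$: the in-degree is monotone, so the degree lower bound is preserved, and the concentration bound combined with $p\rho A_1<1$ shows
\[
r(v,2T)=\left(\frac{A_1\deg^-(v,2T)+A_2}{c_m\cdot 2T}\right)^{1/m}\leq r(v,T)\cdot 2^{(p\rho A_1-1)/m}(1+o(1))<r(v,T),
\]
so $(1+\eps)r(v,2T)\leq\delta(v)$ still holds. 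Iterating the lemma at times $T,2T,4T,\dots,2^iT$ for $i=O(\log n)$ covers the whole interval $[T,n]$, and the accumulated multiplicative error factor
\[
\prod_{j=1}^{i}\bigl(1+\tfrac{3}{p\rho A_1}\sqrt{d^{-1}2^{-p\rho A_1 j}\log n}\bigr)=\exp\bigl(O(\sqrt{d^{-1}\log n})\bigr)=1+o(1)
\]
stays negligible because $d\geq \omega\log n$ and the series in the exponent is geometric, exactly as on page 11 of the excerpt.

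Finally, each application of Lemma~\ref{thm:conc} fails with probability $O(n^{-3})$, we apply it $O(\log n)$ times per vertex, and there are at most $n$ vertices (and, if needed, a polynomial number of candidate starting times $T$ to consider since $T$ ranges over integers $\leq n$). A union bound then gives the conclusion uniformly over all $v$ and all admissible $T$ with probability $1-o(n^{-1})$.

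I do not expect any real obstacle: Lemma~\ref{thm:conc} is already exactly the one-step version of this corollary, and the iteration plus the error-accumulation estimate have been carried out in the proof of Theorem~\ref{degconc}. The only care needed is to note that the argument depends on $T$ only through the two stated hypotheses, so the specific identities of $\hat T_v$ and $\hat t_v$ used in Theorem~\ref{degconc} play no role.
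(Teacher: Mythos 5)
Your approach is the paper's own: the paper gives no separate proof of Corollary~\ref{cor:tT}, asserting only that it is immediate from Theorem~\ref{degconc}, and the proof of that theorem is exactly the iteration of Lemma~\ref{thm:conc} you describe. You correctly isolate the key observation, namely that the iteration uses the starting time only through the two hypotheses $\deg^-(v,T)\geq\omega\log n$ and $(1+\eps)r(v,T)\leq\delta(v)$, and your verification that these hypotheses propagate from $T$ to $2T$ and that the accumulated error stays $1+o(1)$ matches the paper's computation.

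The one step that does not quite close as written is the final union bound. A single run of the iteration (fixed $v$, fixed $T$) succeeds with probability $1-O(n^{-3}\log n)=1-o(n^{-2})$; a union over the $n$ vertices \emph{and} over up to $n$ admissible starting times per vertex therefore only gives $1-o(1)$, not the claimed $1-o(n^{-1})$. The repair is that no union over $T$ is needed: for each $v$, run the iteration once from the \emph{earliest} admissible time $T^*$. On the resulting good event, $\deg^-(v,t)=\deg^-(v,T^*)\left(t/T^*\right)^{p\rho(v)A_1}(1+o(1))$ for all $t\in[T^*,n]$, and for any later admissible $T$ the corollary follows by dividing the estimates at times $t$ and $T$:
\[
\frac{\deg^-(v,t)}{\deg^-(v,T)}=\frac{(t/T^*)^{p\rho(v)A_1}}{(T/T^*)^{p\rho(v)A_1}}\,(1+o(1))=\left(\frac tT\right)^{p\rho(v)A_1}(1+o(1)).
\]
With one run per vertex the failure probability is $o(n^{-2})$ per vertex and hence $o(n^{-1})$ after the union over vertices, as required. (That $T^*$ is a random time is harmless: the starting time in the proof of Theorem~\ref{degconc} is random as well, and Lemma~\ref{thm:conc} is applied conditionally on the history up to the starting time.)
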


\begin{cor}\label{cor:upper}
Let $\omega =\omega (n)$ be any function tending to infinity together with $n$. The following holds with probability $1-o(n^{-1})$. For any node $v_i$ born at time $i \ge 1$, and $i \le t \le n$ we have that
\begin{equation}\label{upperbounddegree}
\deg(v_i,t)\leq \omega \log n \left(\frac{t}{i}\right)^{p \rho_{\max}A_1}.
\end{equation}
\end{cor}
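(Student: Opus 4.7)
The plan is to mimic the coupling and stopping-time argument of Theorem~\ref{thm:exp_degree} and Lemma~\ref{thm:conc}, but with $\rho(v)$ replaced throughout by $\rho_{\max}$, so that no geometric assumption on $\delta(v)$ is needed. The key observation is that irrespective of the position of $v_i$, at every time step $t$ the conditional probability that the newly arriving vertex attaches to $v_i$ is at most
$$p\rho_{\max}\,|S(v_i,t)| = \frac{p\rho_{\max}(A_1\deg^-(v_i,t)+A_2)}{t},$$
because $\rho\le\rho_{\max}$ uniformly on $S$. This is exactly the inequality that drives both the expectation bound in Theorem~\ref{thm:exp_degree} and the upper tail of Lemma~\ref{thm:conc}.

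First I would dispose of the trivial regime. If $\deg^-(v_i,t)\le\omega\log n$, then since $(t/i)^{p\rho_{\max}A_1}\ge 1$ for $t\ge i$, the inequality~(\ref{upperbounddegree}) holds automatically. Hence we may focus on the first time $T^*_v$ at which $\deg^-(v,T^*_v)$ exceeds $(\omega/2)\log n$, if such a time exists before $n$. Starting from $T^*_v$, I would run the Lemma~\ref{thm:conc} stopping-time argument verbatim with $\rho(v)$ replaced by $\rho_{\max}$, keeping only the upper side: if $\deg^-(v,t)\le d(t/T)^{p\rho_{\max}A_1}+(3/(p\rho_{\max}A_1))(t/T)\sqrt{d\log n}$ is violated for the first time at some $T_0\in[T,2T]$, then the number of edges acquired on $[T,T_0)$ is stochastically dominated by a sum of independent indicators whose mean is strictly less than the forbidden count, and the Chernoff bound~(\ref{eq:Ch}) shows failure has probability $O(n^{-3})$.

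Iterating this argument on doubling windows $[2^jT^*_v, 2^{j+1}T^*_v]$, $j=0,\ldots,O(\log n)$, with the output of one window becoming the input of the next, yields with probability $1-o(n^{-2})$ that
$$\deg^-(v,t)\le (\omega/2)\log n\cdot(t/T^*_v)^{p\rho_{\max}A_1}(1+o(1))$$
for all $t\in[T^*_v,n]$. The accumulated multiplicative error is controlled by the same geometric sum $\sum_j 2^{-p\rho_{\max}A_1 j/2}$ that appears in the proof of Theorem~\ref{degconc}, which stays $1+o(1)$ because $d\ge(\omega/2)\log n$ grows faster than $\log n$. Since $T^*_v\ge i$ (as $v_i$ cannot exist before time $i$) and $(1+o(1))\le 2$ for $n$ large, the bound~(\ref{upperbounddegree}) follows. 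A union bound over at most $n$ vertices upgrades the $1-o(n^{-2})$ guarantee to the required $1-o(n^{-1})$.

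The main obstacle I anticipate is not any single step but rather the bookkeeping at the junction between the trivial regime $\deg^-(v_i,t)\le\omega\log n$ and the concentration regime: one has to verify that the constant $\omega/2$ chosen for the degree threshold, together with the $(1+o(1))$ error accumulated over $O(\log n)$ iterations, still fits inside the target constant $\omega$ on the right-hand side of~(\ref{upperbounddegree}). This is immediate once one observes that $\omega=\omega(n)\to\infty$, so any multiplicative loss by a bounded factor can be absorbed into $\omega$ for $n$ sufficiently large.
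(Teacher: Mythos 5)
Your proof is correct, and it handles the one genuinely delicate point of this corollary---nodes near a region boundary---differently from the paper. The paper splits into two cases: if $(1+\eps)r(v,T)<\delta(v)$ it simply invokes Corollary~\ref{cor:tT}, and otherwise it asserts that the degree of $v$ is stochastically dominated by the degree of a node with the same birth time placed far from the border in a region of density $\rho_{\max}$, so that the first case applies to the dominating node. You instead observe that the attachment probability at every step is at most $p\rho_{\max}|S(v_i,t)|$ no matter where $v_i$ sits, so the upper-tail half of the stopping-time argument of Lemma~\ref{thm:conc} goes through with $\rho(v)$ replaced by $\rho_{\max}$ and with no hypothesis on $\delta(v)$ at all; this is the same uniform bound that drives the upper bound in Theorem~\ref{thm:exp_degree}, and you correctly note that only the upper side survives this replacement (the lower side genuinely needs the sphere to stay inside the region). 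Your version buys a self-contained, single-case argument that avoids the paper's somewhat informal coupling with a fictitious interior node; the paper's version is shorter because it reuses Corollary~\ref{cor:tT} as a black box for the generic case. The remaining bookkeeping---the trivial regime $\deg^-(v_i,t)\le\omega\log n$, the threshold $(\omega/2)\log n$, the $(1+o(1))$ accumulation over $O(\log n)$ doubling windows, the inequality $T^*_v\ge i$, and the union bound over the $n$ vertices---matches the paper exactly, and your remark that the factor $1/2$ and the accumulated error are absorbed because $\omega\to\infty$ is precisely the absorption the paper performs.
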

\begin{proof}
The statement is trivially true if $\deg(v_i,t) \le \omega \log n$ so we may assume that $t$ is such that $\deg(v_i, t) > \omega \log n$. Let $T$ be the first time that the in-degree of $v$ exceeds $(\omega/2) \log n$; clearly, $\deg^-(v,T) = (1/2+o(1)) \omega \log n$. First assume that $(1+\eps)r(v,T)< \delta(v)$ for some $\eps > 0$. By Corollary~\ref{cor:tT}, 
\[
\deg^-(v,t) = \deg^-(v,T) \left( \frac tT \right)^{p\rho (v) A_1}(1+o(1))\leq \omega \log n \left(\frac{t}{i}\right)^{p \rho_{\max}A_1},
\]
where the last step follows since we may assume that $n$ large enough so that the $(1+o(1))$ term is less than $2$, and the fact that $T>i$ and $\rho\leq \rho_{\max}$.

However, even if $v$ is close to the border of the region, that is, $(1+o(1))r(v,T) \geq \delta(v)$, this argument applies. Namely, in this case the degree of $v$ is stochastically bounded above by the degree of a node with the same birth time, but born in a region with density $\rho_{\max}$, and position far from the border. Therefore, the argument above still applies. 
\end{proof}

\subsection*{Proof of Theorem~\ref{thm:noEdges}}
 
Let $Z_j$ be the indicator variable of the event $\{ v_j\in \reg_\ell\}$. By definition of the process, $Z_j$ is a Bernouilli variable with expectation $q_{\ell}$, and the variables $Z_j$ are independent for different values of $j$. Thus, $|V(G_n)\cap \reg_\ell |=\sum_{j=1}^n Z_j$ is the sum of $n$ independent binomial random variables. Thus $\E(|V(G_n)\cap \reg_\ell |)=q_{\ell}n$, and it follows from Lemma~\ref{lem:Chernoff} that for every $\ell$, \aas\ $\left| |V(G_n)\cap \reg_\ell |-q_{\ell}n\right| = O(\omega \sqrt{n}) = o(n)$, where $\omega = \omega(n)$ is any function tending to infinity together with $n$. Since the number of regions is assumed to be a constant independent of $n$, the desired property holds \aas\ for all regions.

For the second statement, we examine the number of edges whose endpoints are in the same region, that is, the number of edges that do not cross a boundary. We set $M^{\ell}_t$ to be $\bigl|\{(u,v) \in E(G_t) ~|~ u,v \in \reg_{\ell}\}\bigr|$. We create the indicator variable $X_{i,j}$, such that 
$$
X_{i,j} =
\begin{cases}
      1, & \text{if }v_i, v_j \in \reg_{\ell} \text{ and }(v_i,v_j) \in E(G_n)\\
      0, & \text{otherwise}.
   \end{cases}
$$
Thus,
\begin{align}
\E (M^{\ell}_{t+1} | G_t) &= 
M^{\ell}_t + \sum_{v_j \in \reg_{\ell}, j \leq t} \E (X_{t+1,j}| G_t)  \nonumber\\ 
&=M^{\ell}_t + \sum_{v_j \in \reg_{\ell}, j \leq t} p\rho_\ell |S(v_j,t)\cap \reg_\ell |\nonumber \\
&=M^{\ell}_t + \sum_{v_j \in \reg_{\ell}, j \leq t} p\rho_\ell |S(v_j,t)|-p\rho_\ell Y^{\ell}_t,
\label{basicEdges}
\end{align}
where 
\[
Y^{\ell}_t=\sum_{v_j \in \reg_{\ell}, j \leq t}  |S(v_j,t)\setminus \reg_\ell |,
\]
Using the definition of $S(v_j,t)$, we obtain 
\begin{align}
\E (M^{\ell}_{t+1} | G_t) &=
 M^{\ell}_t + \sum_{v_j \in \reg_{\ell}, j \leq t} p\rho_{\ell} \frac{A_1\deg^-(v_j,t) + A_2}{t}-p\rho_{\ell}Y^{\ell}_t\nonumber \\
&=  M^{\ell}_t + \frac{p\rho_{\ell}A_1 (M^{\ell}_t +Z^{\ell}_t)}{t} + \frac{p\rho_{\ell}A_2}{t}|V_t\cap\reg_\ell|-p\rho_{\ell}Y^{\ell}_t,
\label{basicEdges2}
\end{align}
where 
\[
Z^{\ell}_t= | \{ (u,v)\in E(G_t) \,|\, v\in \reg_\ell,\, u\not\in\reg_\ell\}|.
\]

Let $\alpha=p\rho_{\max} A_1$. By Corollary \ref{cor:upper}, for any function $\omega = \omega(n)$ tending to infinity together with $n$, with probability $1-o(n^{-1})$, for all vertices $v_i$ and for all times $i \le t \le n$,
\[
\deg^-(v_i,t)\leq \omega\log n\left(\frac{t}{i}\right)^{\alpha}. 
\]
Let $\mathcal G$ be the event that we have these upper bounds for all vertices $v_i$ and for all times $i \le t \le n$. 

\medskip

Assume first that $\mathcal G$ holds. It follows (deterministically) that for all vertices $v_i$ and for all times $i \le t \le n$,
\[
|S(v_i,t)|=\frac{A_1\deg^-(v_i,t)+A_2}{t} \leq (\omega^2 \log n) t^{\alpha -1} i^{-\alpha},
\]
and thus  
\[
r(v_i,t)\leq (c_m^{-1/m} \omega^{2/m} \log^{1/m} n)t^{(\alpha -1)/m} i^{-\alpha/m} \leq (\omega^3 \log^{1/m} n ) t^{(\alpha -1)/m} i^{-\alpha/m}, 
\]
as usual, assuming that $n$ is large enough.
Let $X_i$ be the indicator variable of the event $\{ S(v_i,t)\not\subseteq \reg_\ell\}$. 
Using the bound on $|S(v_i,t)|$, we obtain
\[
Y^{\ell}_t\leq \sum_{ i \leq t}  X_i |S(v_i,t) |\leq \sum_{i\leq t}   X_i(\omega^2 \log n) t^{\alpha -1} i^{-\alpha}.
\]
If $X_i=1$ then $v_i$ has distance at most  $(\omega^3 \log^{1/m} n ) t^{(\alpha -1)/m} i^{-\alpha/m}$ from the boundary of $\reg_\ell$. Since the length of the boundary of $\reg_\ell$ is at most 4, the boundary strip  in which $v_j$ must be located has area at most $(4 \omega^3 \log^{1/m} n ) t^{(\alpha -1)/m} i^{-\alpha/m}$. Thus
\[
\E (X_i|{\mathcal G})\leq (4 \omega^3 \log^{1/m} n ) t^{(\alpha -1)/m} i^{-\alpha/m}.
\]
Combining this with the bound on $Y^{\ell}_t$, we obtain
\begin{eqnarray*}
\E (Y^\ell_t |{\mathcal G}) &\leq &  \sum_{i\leq t}   (4 \omega^3 \log^{1/m} n ) t^{(\alpha -1)/m} i^{-\alpha/m} (\omega^2 \log n)t^{\alpha -1} i^{-\alpha}\\
&=&(4 \omega^5 \log^{1+1/m} n)t^{-(1-\alpha)(1+\frac{1}{m})} \sum_{i\leq t} i^{-\alpha(1+\frac{1}{m})}\\
&\leq & c (\omega^5 \log^{2+1/m} n) t^{-\beta},  
\end{eqnarray*}
where  $\beta =\min\{\frac{1}{m},(1-\alpha)(1+\frac{1}{m}) \}$ and $c$ is an appropriate constant independent of $t$ and $n$. Note that $\beta >0$, and thus, for $t\ge \log^{4/\beta}n$, the expectation of $Y^\ell_t$ is $o(1)$ and so negligible (as $\omega$ can be tending to infinity arbitrarily slowly). 

Next, consider the number of cross-border edges, $Z^\ell_t$. At time $t$, an edge from $v_t$ to $v_i$, $i<t$,  which contributes to $Z^\ell_t$ can only be created if $v_t\in  S(v_i,t)\setminus \reg_\ell$. The probability that such an edge is created is at most $p\rho_{\max} |S(v_i,t)\setminus \reg_\ell| $. Thus we have that 
\[
\E (Z^{\ell}_{t}|Z_{t-1}^{\ell})=Z_{t-1}^{\ell}+\sum_{v_i \in \reg_\ell} p\rho_{\max}|S(v_i,t)\setminus \reg_\ell| =Z_{t-1}^{\ell}+p\rho_{\max} Y^{\ell}_t.
\] 
Therefore, 
\[
\E (Z^{\ell}_{t}|{\mathcal G})=\sum_{\tau =1}^t p\rho_{\max} \E (Y^{\ell}_{\tau}|{\mathcal G})\leq c' (\log^4 n) t^{1-\beta},
\]
for some constant $c'$  independent of $t$ and $n$.

By taking the expectation of (\ref{basicEdges2}) and setting $m^{\ell}_t=\E(M^{\ell}_t|{\mathcal G})$, we obtain the following recurrence for the (conditional) expected value $m^{\ell}_t$ and $t \geq \log^{1+4/\beta} n$,
\begin{equation}\label{eq:mt}
m^{\ell}_{t+1} = m^{\ell}_t\left(1 + \frac{p\rho_{\ell}A_1}{t}\right
) + p\rho_{\ell}q_\ell A_2 + o(1) .
\end{equation}
To solve this recurrence, we use the following lemma on real sequences, which is Lemma~3.1 from~\cite{fanchung}.
\begin{lemma}\label{lem}
If $(\alpha_{t}),$ $(\beta_{t})$ and $(\gamma_{t})$ are real sequences satisfying the relation
\[
\alpha_{t+1}=\left( 1-\frac{\beta_{t}}{t}\right) \alpha_{t}+\gamma_{t},
\]
and $\lim_{t\rightarrow \infty }\beta_{t}=\beta>0$ and $\lim_{t\rightarrow \infty }\gamma_{t}=\gamma,$ then $\lim_{t\rightarrow \infty}\frac{\alpha_{t}}{t}$ exists and equals $\frac{\gamma}{1+\beta}$.
\end{lemma}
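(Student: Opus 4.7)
The natural approach is to prove $\limsup_{t\to\infty} \alpha_t/t \leq \gamma/(1+\beta)$ and the matching lower bound separately, via a standard sandwich with linear recurrences having constant coefficients. Fix $\eps>0$ small enough that $\beta-\eps>0$, and pick $t_0$ large enough that $\beta_t\in(\beta-\eps,\beta+\eps)$ and $\gamma_t\in(\gamma-\eps,\gamma+\eps)$ for all $t\geq t_0$. The plan is to compare $\alpha_t$ with the solution $a^+_t$ of the auxiliary recurrence $a^+_{t+1}=(1-(\beta-\eps)/t)\,a^+_t+(\gamma+\eps)$, started at $a^+_{t_0}=\alpha_{t_0}$ (and with the analogous $a^-_t$ using $(\beta+\eps,\gamma-\eps)$ for the lower bound). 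A short induction, using that $(1-\beta_t/t)>0$ for $t$ large and that the affine update is nondecreasing in $\alpha$, shows $a^-_t\leq \alpha_t\leq a^+_t$ for $t\geq t_0$.

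The core computation is solving the auxiliary recurrences. Guessing a particular solution of the form $a_t=Ct$ and substituting into the recurrence for $a^+_t$ gives $C(1+\beta-\eps)=\gamma+\eps$, so $C^+:=(\gamma+\eps)/(1+\beta-\eps)$ works. The general solution is $a^+_t=C^+t+h_t$, where $h_t$ satisfies the homogeneous recurrence $h_{t+1}=(1-(\beta-\eps)/t)\,h_t$. Taking logarithms and using $\log(1-x)=-x+O(x^2)$,
\[
\log h_t-\log h_{t_0}=-(\beta-\eps)\sum_{s=t_0}^{t-1}\frac{1}{s}+O(1)=-(\beta-\eps)\log(t/t_0)+O(1),
\]
so $h_t=O(t^{-(\beta-\eps)})=o(t)$. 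Therefore $a^+_t/t\to C^+$, and by the sandwich $\limsup_{t\to\infty}\alpha_t/t\leq C^+$. A symmetric argument gives $\liminf_{t\to\infty}\alpha_t/t\geq (\gamma-\eps)/(1+\beta+\eps)$. Letting $\eps\to 0$, both bounds tend to $\gamma/(1+\beta)$, and the lemma follows.

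The main obstacle, though minor, is justifying the sandwich cleanly. The comparison relies on the inequality $(1-\beta_t/t)x+\gamma_t\leq (1-(\beta-\eps)/t)x+(\gamma+\eps)$, which for $x\geq 0$ follows from $\beta_t\geq \beta-\eps$ and $\gamma_t\leq \gamma+\eps$. One therefore needs to know that $\alpha_t$ is eventually nonnegative (with an analogous concern for the lower comparison). If $\gamma\neq 0$ this is automatic, since $\gamma_t\to\gamma$ forces $\alpha_t$ to eventually have the sign of $\gamma$ up to an $O(1)$ term; and if $\gamma=0$ one can simply replace $\alpha_t$ by $\alpha_t+M$ for a large constant $M$, which adds a vanishing term to the recurrence and shifts the limit by $0$. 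Once this is handled, the rest of the argument is a textbook linear-recurrence computation, as outlined above.
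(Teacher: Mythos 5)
The paper does not actually prove this lemma: it is imported verbatim as Lemma~3.1 of Chung and Lu~\cite{fanchung}, so there is no in-paper proof to compare against. Your comparison (``sandwich'') argument is the standard route, and the core computation is correct: the particular solution $Ct$ of the frozen recurrence gives exactly $C^{+}=(\gamma+\varepsilon)/(1+\beta-\varepsilon)$; the homogeneous part is a product of factors in $(0,1)$ for large $t$, hence $O(1)=o(t)$ (your $t^{-(\beta-\varepsilon)}$ estimate is more than enough); and letting $\varepsilon\to 0$ yields the claim. For the application in this paper ($\alpha_t=m^{\ell}_t\ge 0$ and $\gamma=p\rho_{\ell}A_2q_{\ell}>0$) your argument is complete as written, and for general $\gamma>0$ your observation that $\alpha_t$ becomes and stays nonnegative is indeed easy to justify (while $\alpha_t<0$ one has $\alpha_{t+1}\ge\alpha_t+\gamma_t\ge\alpha_t+\gamma/2$ eventually).

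The one genuine gap is your patch for $\gamma=0$ (and, implicitly, $\gamma<0$). Replacing $\alpha_t$ by $\alpha_t+M$ for a constant $M$ need not make the sequence nonnegative: a priori $\alpha_t$ is only $O(t)$, and with $\gamma_t\to 0$ it can drift to $-\infty$ at a rate like $-t/\log t$ (take $\gamma_t\sim -1/\log t$), so no constant shift works. The correct repair uses linearity in a slightly different way: set $\tilde\alpha_t=\alpha_t+ct$. Then $\tilde\alpha_{t+1}=(1-\beta_t/t)\tilde\alpha_t+\tilde\gamma_t$ with $\tilde\gamma_t=\gamma_t+c(1+\beta_t)\to\gamma+c(1+\beta)$, which is positive for $c$ large enough; the already-settled case of positive limit then gives $\tilde\alpha_t/t\to\gamma/(1+\beta)+c$, hence $\alpha_t/t\to\gamma/(1+\beta)$. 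This also disposes of $\gamma<0$, where ``$\alpha_t$ eventually has the sign of $\gamma$'' would break, rather than rescue, your upper comparison, since multiplying a negative $\alpha_t$ by the larger coefficient reverses the inequality you need. With this one substitution the proof is complete.
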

Using this lemma, we see that $\lim_{t\rightarrow \infty} \frac{m_t^{\ell}}{t}=\frac{p\rho_{\ell}A_2}{1-p\rho_{\ell}A_1}q_\ell$, and thus the (first-order) solution of the recurrence (\ref{eq:mt}) is:
\begin{align}
m^{\ell}_n = (1+o(1))\frac{p\rho_{\ell}A_2}{1-p\rho_{\ell}A_1}q_\ell n .\label{upBoundEdge}
\end{align}
Here we use that $p\rho_{\ell}A_1 < 1$ as given. Note that the $o(1)$ term in the recurrence and the lower bound on $t$ only affect the $(1+o(1))$ term of the solution. 

\medskip

Finally, since $\Prob (\bar{\mathcal G}) =o(n^{-1})$ and (deterministically) $M^{\ell}_n \le {n \choose 2}$, we get  
\[
\E (M^{\ell}_n)=\Prob ({\mathcal G})m_n^{\ell}+\Prob (\bar{\mathcal G})\E (M^{\ell}_n|\bar{\mathcal G})= (1+o(1)) m_n^{\ell}+o(n)=(1+o(1))\frac{p\rho_{\ell}A_2}{1-p\rho_{\ell}A_1}q_\ell n.
\]
This completes the proof.

\subsection*{Proof of Theorem~\ref{thm:cn2}}

Fix $\omega$ and $\eps$ as in the statement of the theorem. 
Let $u$ and $v$ be nodes of final degrees $\deg^{-}(u,n)=k$ and $\deg^{-}(v,n)=j$ such that $k \geq j \ge \omega^2 \log n$,  where both $u$ and $v$ are located in a region $\reg$ with density $\rho$. Let $t_v$ and $T_v$ be defined as in~(\ref{eqn:tv}). 
Assume that the conditions of the theorem hold, that is,
\begin{equation}\label{eqn:condition}
\delta(v)^m \geq cj\mbox{ and }\delta (u)^m\geq ck,\mbox{ where }
c=(1+\eps) \left(\frac{A_1}{c_m n^{p\rho A_1}T_{v}^{1-p\rho A_1}}\right).
\end{equation}

Since $T_v = n\left(\omega \log n/j\right)^{1/p\rho A_1}\leq n\omega^{-1/p\rho A_1}=o(n)$, it follows that 
$$
\delta(v) \ge (1+\eps)\left(\frac{A_1 j + A_2}{c_m n}\right)^{1/m} \ \ \mbox{ and } \ \ \delta(u) \ge (1+\eps)\left(\frac{A_1 k + A_2}{c_m n}\right)^{1/m}.
$$
Therefore, the conditions of Theorem~\ref{degconc} are satisfied. Thus, from time $\max\{ T_{v},t_v\}$ until the end of the process we have concentration of the degree of node $v$, as given by~(\ref{eqn:deg}). Recall that $t_{v} = (1+\eps)\left(\frac{A_1 j}{c_m n^{p\rho A_1}\delta(v)^m}\right)^{1/(1-p\rho A_1)}$ as in~(\ref{eqn:tv}). Rewriting condition (\ref{eqn:condition}), we obtain that 
\[
T_{v} \geq (1+\eps)^{1/(1-p\rho A_1)}\left(\frac{A_1 j}{c_m n^{p\rho A_1}\delta(v)^m}\right)^{1/(1-p\rho A_1)} >t_v.
\]
Thus $\max\{ T_{v},t_v\} = T_v$ and so, in fact, the degree of $v$ is concentrated from time $T_v$ until time $n$.

Similarly, let $t_u$ and $T_u$ be defined as in~(\ref{eqn:tv}). The argument above, with $j$ replaced by $k$, shows that $\max\{ t_u,T_u\}=T_u$. By definition, as $k \ge j$, it follows that $T_u\leq T_v$. Hence, by Theorem~\ref{degconc}, we have concentration of the degrees of both $u$ and $v$ from time $T_v$ on.
Precisely, we have that, with probability $(1-o(n^{-1}))$, for all pairs $u$ and $v$ satisfying the conditions as stated above, and for all times $t$, $T_v\leq t\leq n$,
\begin{equation}\label{eqn:conc}
\deg^{-}(u,t) = (1+o(1))k\left(\frac{t}{n}\right)^{p\rho A_1} \ \ \ \mbox{ and }\ \ \ \deg^{-}(v,t) = (1+o(1))j\left(\frac{t}{n}\right)^{p\rho A_1}.
\end{equation}

The rest of the proof proceeds as the proof of Theorem 3.1 in~\cite{geo1}.
\paragraph{Case 1.}
By (\ref{eqn:conc}) above and the definition of $T_v$ and $T_u$, we have that $\deg^{-} (v,T_v)=(1+o(1))\omega \log n$, and  $\deg^{-}(u,T_{v}) = (1+o(1)) (\omega \log n)(k/j)$. 
This implies that the radius of influence of $u$ satisfies:
$$
r(u,T_{v}) = \Theta \left( \left(\frac{(\omega \log n)(k/j) }{T_{v}}\right)^{1/m} \right). 
$$
The condition assumed in this case is that
$$
d(u,v) \ge \eps \left(\frac{(\omega \log n) (k/j)}{T_{v}}\right)^{1/m}.
$$ 
Thus, at time $T_v$, both radii $r(u,T_{v})$ and $r(v,T_{v})$ are $O(d(u,v))$. Moreover, both radii decrease (in order) from time $T_{v}$ onwards, whereas $d(u,v)$ is independent of time. Thus there must exist a constant $c$ (depending on $\eps$ but not on $n$) such that, for all times $t$, $cT_{v}\leq t\leq n$, the areas of influence of $u$ and $v$ are disjoint.  Then the total number of common neighbours could only be, at most, $\min \{ \deg^-(u,cT_{v}), \deg^-(v,cT_{v}) \} = O(\omega \log n)$. 

\paragraph{Case 2.} 
Suppose $d$ satisfies
$$
d(u,v) \le \left( \frac {A_1 k+A_2}{c_m n} \right)^{1/m} - \left( \frac {A_1 j + A_2}{c_m n} \right)^{1/m} =r(u,n)-r(v,n).
$$
Note that this condition implies (deterministically) that at time $n$ the sphere of influence of $v$ is contained in the sphere of influence of $v$. We now see that this situation occurs in approximate form throughout the process. From~(\ref{eqn:conc}), we see that, for $T_v\leq t\leq n$, $\deg(v,t)=(1+o(1))\left(\frac{j}{k}\right)\deg(u,t)$, and thus $r(v,t)=(1+o(1))\left(\frac{j}{k}\right)^{1/m}r(u,t)$. 

If $j\leq \alpha k$ for some constant $\alpha \in (0,1)$, then we have that the difference between $r(v,t)$ and $r(u,t)$ behaves as $(1+o(1)) c r(u,t)$, where $c=\left(\frac{j}{k}\right)^{1/m}\leq \alpha^{1/m}<1$. In particular, this implies that this difference tends to shrink over time. Thus we have that, for $T_v\leq t\leq n$, 
$$
d(u,v)\leq r(u,n)-r(v,n) \leq (1+o(1))(r(u,t)-r(v,t)).
$$
Therefore, all but a negligible fraction of the sphere of influence of $v$ lies inside the sphere of influence of $u$ during the whole process. 

If $k=(1+o(1))j$, so $(\frac{j}{k})^{1/m}=1+o(1)$, the difference between $r(v,t)$ and $r(u,t)$ is $o(r(u,t))$ for $T_v\leq t\leq n$, and specifically at time $t=n$. From the condition of this case, we know that at time $n$, $S(v,n)$ is completely contained inside $S(u,n)$. This, combined with the fact that $r(v,t)-r(u,t)=o(r(u,t))$ implies that the spheres of influence of $u$ and $v$ overlap in all but a negligible part during the entire process. 

Any vertex $w$ that links to $v$ must lie inside the sphere of influence of $v$. Since most of the sphere of influence of $v$ is contained in that of $u$ in both cases mentioned above, this means that it is likely that $w$ lies inside the sphere of influence of $u$ as well, and thus has a probability $p$ of also linking to $v$.  Accounting for the small variation in the size of the spheres of influence, we have that the probability that a neighbour of $u$, added between times $T_v$ and $n$,  is also a neighbour of $v$ is $(1+o(1))p$. The number of common neighbours accumulated until time $T_v$ is at most $\deg(u,T_v)=O(\omega \log n)$, which is of smaller order than $j$, the final degree of $u$. 

Therefore, $\E cn(u,v,n) = (1+o(1)) p j$. Finally, note that the number of common neighbours is a sum of independent random indicator variables with Bernouilli distribution; each variable corresponds to a situation when a neighbour of $v$ falls into a sphere of influence of $u$. It follows that $cn(u,v,n) \in \textrm{Bin}((1+o(1))j,p)$. The concentration follows from the bound (\ref{lem:Chernoff}).

\paragraph{Case 3.}  

Suppose that $d=d(u,v)$ satisfies
$$
r(u,n)-r(v,n) < d(u,v) <\eps  \left(\frac{(\omega \log n) (k/j)}{T_{v}}\right)^{1/m}= (1+o(1)) \eps A_1^{-1/m} r(u,T_v).
$$
The analysis for this case is based on the assumption that, at time $T_v$, the sphere of influence of $v$ is contained in that of $u$, and at time $n$ the spheres are disjoint. Thus, in some narrow time interval around a time $t_0$ between times $T_v $ and $n$, the spheres become separated, and after this no more common neighbours can be formed. However, the conditions of this case do not guarantee that we have this situation: it may be that the conditions hold, but the sphere of influence of $v$ is not completely contained in that of $u$ at time $T_v$, or that the spheres are not completely disjoint at time $n$. However, the conditions guarantee that the asymptotic behaviour still applies in this case.

Let $t^-$ be the first time instance after $T_v$ when $S(u,t)$ is not completely contained in $S(v,t)$. Let $t^+$ be the last time when the spheres overlap, or $t^+=n$ if the spheres overlap at time $n$. (So  $T_v \le t^- \le t^+\leq n$). From time $T_v$ until time $t^-$, each neighbour of $u$ will be a common neighbour of $v$ and $u$ with probability $p$. From time $t^+$ to $n$, no common neighbours can be created. From time $t^-$ until time $t^+$, the probability that a neighbour of $u$ becomes a neighbour of $v$ is \emph{at most} $p$.  Thus, $p \deg^-(u,t^-)$ and $p \deg^-(u,t^+)$ form a lower and an upper bound, respectively, on the expected number of common neighbours of $u$ and $v$.

Since the centres of $S(u,t)$ and $S(v,t)$ are at distance $d$ from each other, the definition of $t^-$ and $t^+$ translate into the following conditions on the radii of the spheres of influence:
\begin{eqnarray*}
r(v, t^-) - r(u,t^-) &=& (1+o(1)) d,\\
r(v, t^+) + r(u,t^+) &=& (1+o(1)) d.
\end{eqnarray*}
(The factor $(1+o(1))$ is caused by the fact that the spheres of influence increase or decrease in discrete amounts.)

Using Equation (\ref{eqn:conc}) for the degree of $u$ and $v$ from time $T_v$, and translating this into conditions on the radius of the sphere of influence, we obtain 
\begin{eqnarray*}
&r(v,t^-)-r(u,t^-)&\\
 &= (1+o(1))
 &\left(\left(\frac{A_1 k\left(\frac{t^-}{n}\right)^{p\rho A_1} +A_2}{c_m t^-}\right)^{1/m}-\left(\frac{A_1 j\left(\frac{t^-}{n}\right)^{p\rho A_1} +A_2}{c_m t^-}\right)^{1/m}\right)
\\
&=  (1+o(1)) &\left(
\frac{A_1}{c_m} kn^{-p\rho A_1}( t^-)^{p\rho A_1-1} \right)^{1/m} \left( 1-\left(\frac{j}{k}\right)^{1/m} \right).
\end{eqnarray*}
A similar argument shows that 
\[
r(v,t^+)+r(u,t^+) = (1+o(1)) \left(
\frac{A_1}{c_m} kn^{-p\rho A_1}(t^+)^{p\rho A_1-1} \right)^{1/m} \left( 1+\left(\frac{j}{k}\right)^{1/m} \right).
\] 
Define $t_0 = (t^+ + t^-)/2$. Then, by the above, we get that $t^-=t_0 (1-O(\left(j/k\right)^{1/m}) )$, $t^+=t_0 (1+O(\left(j/k\right)^{1/m}))$, and 
\[
t_0 = (1+o(1)) \left( \frac{A_1 k}{c_m} n^{-p\rho A_1} d^{-m} \right)^{\frac{1}{1-p\rho A_1}}.
\]

It follows from the discussion from Case 2 that \aas~the number of common neighbours of $u$ and $v$ is bounded from below by $(1+o(1))p\deg^-(u,t^-)$, and from above by $(1+o(1))p\deg^-(v,t^+)$. Using our knowledge about the behaviour of the in-degree of $u$ given by~(\ref{eqn:conc}), this leads to the following expression:
\begin{eqnarray*}
cn(u,v,n)&=& pj \left(\frac{t_0}{n}\right)^{p\rho  A_1} \left( 1+o(1)+O \left( \left(\frac{j}{k}\right)^{1/m} \right) \right)\\
&=&  pj n^{-\frac{p\rho A_1}{1-p\rho A_1}} \left( \frac{A_1}{c_m} kd^{-m} \right)^{\frac{p\rho A_1}{1-p\rho A_1}} \left( 1+o(1)+O \left(\left(\frac{j}{k}\right)^{1/m} \right) \right).
\end{eqnarray*}

This concludes the proof. 

\bibliographystyle{plain}      % mathematics and physical sciences
\bibliography{bibliographyLump}

\begin{thebibliography}{10}

\bibitem{spa1}
W.~Aiello, A.~Bonato, C.~Cooper, J.~Janssen, and P.~Pra\l{}at.
\newblock A spatial web graph model with local influence regions.
\newblock {\em Internet Mathematics}, 5(1-2):175--196, 2008.

\bibitem{threshbrad}
M.~Bradonji\'c, A.~Hagberg, and A.~Percus.
\newblock The structure of geographical threshold graphs.
\newblock {\em Internet Math.}, 5:113--140, 2008.

\bibitem{fanchung}
F.~Chung and L.~Lu.
\newblock {\em Complex Graphs and Networks}.
\newblock American Mathematical Society, August 2006.

\bibitem{typSPA}
C.~Cooper, A.~Frieze, and P.~Pra\l{}at.
\newblock Some typical properties of the spatial preferred attachment model.
\newblock {\em Internet Mathematics}, 10:27--47, 2014.

\bibitem{geopref1}
A.~Flaxman, A.~Frieze, and J.~Vera.
\newblock A geometric preferential attachment model of networks.
\newblock {\em Internet Mathematics}, 3(2):187--206, 2006.

\bibitem{geopref2}
A.~Flaxman, A.~Frieze, and J.~Vera.
\newblock {A Geometric Preferential Attachment Model of Networks II}.
\newblock {\em Internet Mathematics}, 4(1):87--112, 2007.

\bibitem{Hoff}
P.~Hoff, A.~Raftery, and M.~Handcock.
\newblock Latent space approaches to social network analysis.
\newblock {\em Journal of the American Statistical Association}, 97:1090--1098,
  2001.

\bibitem{jacob3}
E.~{Jacob} and P.~{M{\"o}rters}.
\newblock Robustness of scale-free spatial networks.
\newblock {\em arXiv:1504.00618}, 2015.

\bibitem{jacob1}
E.~{Jacob} and P.~{M{\"o}rters}.
\newblock {Spatial preferential attachment networks: Power laws and clustering
  coefficients}.
\newblock {\em Annals of Applied Probability}, 25(2):632--662, 2015.

\bibitem{JLR}
S.~Janson, T.~{\L}uczak, and A.~Ruci\'nski.
\newblock {\em Random Graphs}.
\newblock Wiley, New York, USA, 2000.

\bibitem{matt}
J.~Janssen, M.~Hurshman, and N.~Kalyaniwalla.
\newblock Model selection for social networks using graphlets.
\newblock {\em Internet Mathematics}, 8(4):338--363, 2012.

\bibitem{waw2013version}
J.~Janssen, P.~Pra\l{}at, and R.~Wilson.
\newblock Asymmetric distribution of nodes in the spatial preferred attachment
  model.
\newblock In {\em Proceedings of the 10th Workshop on Algorithms and Models for
  the Web Graph (WAW 2013)}, volume 8305 of {\em Lecture Notes in Computer
  Science}, pages 1--13. Springer, 2013.

\bibitem{geo1}
J.~Janssen, P.~Pra\l{}at, and R.~Wilson.
\newblock Geometric graph properties of the spatial preferred attachment model.
\newblock {\em Advances in Applied Mathematics}, 50(2):243--267, 2013.

\bibitem{simrank}
G.~Jeh and J.~Widom.
\newblock {SimRank: a measure of structural-context similarity}.
\newblock In {\em Knowledge Discovery and Data Mining}, pages 538--543, 2002.

\bibitem{jordan_seq}
J.~Jordan.
\newblock Degree sequences of geometric preferential attachment graphs.
\newblock {\em Advances in Applied Probability}, 42:319--330, 2010.

\bibitem{jordan_nonuni}
J.~{Jordan}.
\newblock {Geometric preferential attachment in non-uniform metric spaces}.
\newblock {\em Electronic Journal of Probability}, 18(8):1--15, 2013.

\bibitem{jordanwade}
J.~Jordan and A.~Wade.
\newblock Phase transitions for random geometric preferential attachment
  graphs.
\newblock {\em arXiv:1311.3776}, 2013.

\bibitem{jaccard1}
M.~Kobayakawa, S.~Kinjo, M.~Hoshi, T.~Ohmori, and A.~Yamamoto.
\newblock Fast computation of similarity based on jaccard coefficient for
  composition-based image retrieval.
\newblock In {\em Proceedings of the 10th Pacific Rim Conference on Multimedia:
  Advances in Multimedia Information Processing}, PCM '09, pages 949--955.
  Springer-Verlag, 2009.

\bibitem{small}
H.~Small.
\newblock Co-citation in the scientific literature: A new measure of the
  relationship between two documents.
\newblock {\em Journal of the American Society for Information Science},
  24(4):265--269, 1973.

\end{thebibliography}

\end{document}